\documentclass[journal]{IEEEtran} 

\usepackage[cmex10]{amsmath}
\usepackage{amsfonts,amssymb}
\interdisplaylinepenalty=2500
\usepackage{array}

\ifCLASSOPTIONcaptionsoff
  \usepackage[nomarkers]{endfloat}
 \let\MYoriglatexcaption\caption
 \renewcommand{\caption}[2][\relax]{\MYoriglatexcaption[#2]{#2}}
\fi

\usepackage{cite}
\usepackage{bbm} 
\usepackage[pdftitle={Coding on infinite alphabets},pdfauthor={Dominique Bontemps},pdfsubject={Information Theory,Statistics},pdfkeywords={Data compression, countable alphabets, redundancy, Bayes, adaptive compression.},colorlinks=true,linkcolor=black,urlcolor=black,citecolor=black,bookmarks=true,bookmarksnumbered=true]{hyperref}

\newcommand{\E}{\ensuremath{\mathbb{E}}}
\newcommand{\1}{\mathbbm{1}}

\newcommand{\N}{\ensuremath{\mathbb{N}}}
\newcommand{\R}{\ensuremath{\mathbb{R}}}
\newcommand{\Proba}{\ensuremath{\mathbb{P}}}

\newcommand{\m}{\ensuremath{\boldsymbol{\widetilde{m}}}}

\DeclareMathOperator{\vol}{Vol}

\newcommand{\ud}{\mathrm{d}}

\newtheorem{thm}{Theorem}
\newtheorem{lem}{Lemma}
\newtheorem{corol}{Corollary}
\newtheorem{prop}{Proposition}
\newtheorem{df}{Definition}

\hyphenation{op-tical net-works semi-conduc-tor}

\begin{document}

\title{Universal Coding on Infinite Alphabets:\\ Exponentially Decreasing Envelopes}

\author{Dominique~Bontemps
\thanks{D. Bontemps is PhD fellowship holder at Laboratoire de Math\'ematiques, Univ. Paris-Sud 11, Orsay, 91405 France.}%
}

%


\maketitle

\begin{abstract}
This paper deals with the problem of universal lossless coding on a countable infinite alphabet. It focuses on some classes of sources defined by an envelope condition on the marginal distribution, namely exponentially decreasing envelope classes with exponent $\alpha$.

The minimax redundancy of exponentially decreasing envelope classes is proved to be equivalent to $\frac{1}{4 \alpha \log e} \log^2 n$. Then, an adaptive algorithm is proposed, whose maximum redundancy is equivalent to the minimax redundancy.
\end{abstract}

\begin{IEEEkeywords}
Data compression, universal coding, infinite countable alphabets, redundancy, Bayes mixture, adaptive compression.
\end{IEEEkeywords}

\IEEEpeerreviewmaketitle

\section{Introduction}

\IEEEPARstart{C}{ompression} of data is broadly used in our daily life: from the movies we watch to the office documents we produce. In this article, we are interested in lossless data compression on an unknown alphabet. This has applications in areas such as language modeling or lossless multimedia codecs.

First, we present briefly the problematics of data compression. More details are available in general textbooks, like \cite{eec_IEEE:cover_thomas_91}. Then we make a short review of preceding results, in which we situate the topic of this article, exponentially decreasing envelope classes, and we announce our results.

\subsection{Lossless data compression}

Consider a finite or countably infinite alphabet ${\mathcal X}$. A source on ${\mathcal X}$ is a probability distribution $\boldsymbol{P}$, on the set ${\mathcal X}^{\N}$ of infinite sequences of symbols from ${\mathcal X}$. Its marginal distributions are denoted by $P^n$, $n \geq 1$ (for $n=1$, we only note $P$). The scope of lossless data compression is to encode a sequence of symbols $X_{1:n}$, generated according to $P^n$, into a sequence of bits as small as possible. The algorithm has to be uniquely decodable.

The binary entropy $H(P^n) = \E_{P^n} [-\log_2 P^n(X_{1:n})]$ is known to be a lower bound for the expected codelength of $X_{1:n}$. From now on, $\log$ denotes the logarithm taken to base $2$, while $\ln$ is used to denote the natural logarithm. Since arithmetic coding based on $P^n$ encodes a message $x_{1:n}$ with $\lceil -\log P^n(x_{1:n})\rceil +1$ bits, this lower bound can be achieved within two bits. Then, the expected redundancy measures the mean number of extra bits, in addition to the entropy, a coding strategy uses to encode $X^n$. In the sequel, we use the word \emph{redundancy} instead of \emph{expected redundancy}.

Furthermore, together with Kraft-McMillan inequality, arithmetic coding provides an almost perfect correspondence between coding algorithms and probability distributions on ${\mathcal X}^{n}$. In this setting, if an algorithm is associated to the probability distribution $Q^n$, its expected redundancy reduces to the Kullbach-Liebler divergence between $P^n$ and $Q^n$ 
\[ D(P^n; Q^n) = \E_{P^n} \left[ \log \frac{P^n(X_{1:n})}{Q^n(X_{1:n})} \right]. \]
We call this quantity (expected) redundancy of the distribution $Q^n$ (with respect to $P^n$).

Unfortunately, the true statistics of the source are not known in general, but $P^n$ is supposed to belong to some large class of sources $\Lambda$ (for instance, the class of all iid sources, or the class of Markov sources). In this paper, the maximum redundancy
\[ R_n(Q^n; \Lambda) = \sup_{\boldsymbol{P} \in \Lambda} R_n(Q^n; P^n) \]
measures how well a coding probability $Q^n$ behave on an entire class $\Lambda$. With this point of view, the best coding probability is a \emph{minimax} coding probability, that achieves the \emph{minimax redundancy}
\[ R_n(\Lambda) = \inf_{Q^n} R_n(Q^n; \Lambda). \]

Another way to measure the ability of a class of sources to be efficiently encoded is the \emph{Bayes redundancy}
\[ R_{n,\mu}(\Lambda) = \inf_{Q^n} \int_{\Lambda} R_n(Q^n; P^n) \,\ud\mu(\boldsymbol{P}) \]
where $\mu$ is a prior distribution on $\Lambda$ endowed with the topology of weak convergence and the Borel $\sigma$-field. Only one coding strategy achieves the Bayes redundancy: the Bayes mixture
\begin{equation*}\label{eq:Mixture}
 M_{n,\mu}(x_{1:n}) = \int_{\Lambda} P^n(x_{1:n}) \,\ud\mu(\boldsymbol{P}).
\end{equation*}
When $\Lambda$ is a class of iid sources on the set ${\mathcal X}=\N_{*}=\N \backslash \{0\}$, there is a natural parametrization of $\Lambda$ by $P_{\boldsymbol{\theta}}(j) = \theta_j$, with $\boldsymbol{\theta} = (\theta_1, \theta_2, \ldots) \in \Theta_\Lambda$.  $\Theta_\Lambda$ is then a subset of
\[ \Theta = \left\{ \boldsymbol{\theta}=(\theta_1,\theta_2,\ldots) \in [0,1]^{\N}: \sum_{i\geq 1} \theta_i = 1 \right\} \]
and it is endowed with the topology of pointwise convergence. In this case we write $\mu$ as a prior on $\Theta_\Lambda$.

Minimax redundancy and Bayes redundancy are linked by an important relation \cite{eec_IEEE:Gallager68,eec_IEEE:DavissonLeonGarcia80}; it is written here in the context of iid sources on a finite or countably infinite alphabet, but Haussler \cite{eec_IEEE:haussler96general} has shown that it can be generalized for all classes of stationary ergodic processes on a complete separable metric space.
\begin{thm} \label{thm:MinimaxMaximin}
 Let $\Lambda$ be a class of iid sources, such that the parameter set $\Theta_{\Lambda}$ is a measurable subset of $\Theta$. Let $n\geq 1$. Then
 \[ R_n(\Lambda) = \sup_{\mu} R_{n,\mu}(\Lambda), \]
 where the supremum is taken over all (Borel) probability measures on $\Theta_{\Lambda}$.
\end{thm}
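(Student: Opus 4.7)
The plan is to establish the two inequalities separately.

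For the easy direction $\sup_\mu R_{n,\mu}(\Lambda) \le R_n(\Lambda)$, note that for any prior $\mu$ on $\Theta_\Lambda$ and any coding probability $Q^n$, the bound $R_n(Q^n;P^n) \le R_n(Q^n;\Lambda)$ holds uniformly in $\boldsymbol{P} \in \Lambda$. Integrating against $\mu$ and taking the infimum over $Q^n$ gives $R_{n,\mu}(\Lambda) \le R_n(\Lambda)$, and then the supremum over $\mu$ preserves the inequality.

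For the reverse inequality, set $F(Q^n,\mu) = \int_\Lambda R_n(Q^n;P^n)\,\ud\mu(\boldsymbol{P})$. Then $F$ is convex in $Q^n$ (by convexity of $Q \mapsto D(P^n;Q)$) and affine in $\mu$, while $\inf_{Q^n}\sup_\mu F = R_n(\Lambda)$ and $\sup_\mu\inf_{Q^n} F = \sup_\mu R_{n,\mu}(\Lambda)$. The required identity is therefore a minimax exchange. If the alphabet were finite and the parameter set $\Theta_\Lambda$ compact, both domains would be compact convex in natural topologies and Sion's minimax theorem (or the von Neumann theorem in the bilinear setting) would apply directly.

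For the countably infinite alphabet $\mathcal{X} = \N_*$, the plan is to approximate by truncation. Let $\Lambda_k \subset \Lambda$ consist of the sources whose marginal is supported on $\{1,\dots,k\}$, and restrict the coding distributions to the compact simplex of probabilities on $\{1,\dots,k\}^n$. Sion's theorem then yields $R_n(\Lambda_k) = \sup_\mu R_{n,\mu}(\Lambda_k)$ on this finite-dimensional setting. The monotone bound $R_n(\Lambda_k) \le R_n(\Lambda_{k+1}) \le R_n(\Lambda)$ is immediate since $\Lambda_k \subset \Lambda_{k+1}$.

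The main obstacle is the passage to the limit $k \to \infty$: one must show $\lim_k R_n(\Lambda_k) = R_n(\Lambda)$ and that a near-optimal prior for some $\Lambda_k$ (viewed as a prior on $\Theta_\Lambda$) produces a Bayes redundancy close to $R_n(\Lambda)$. This requires a lower-semicontinuity argument for $\boldsymbol{P} \mapsto R_n(Q^n;P^n)$ under pointwise convergence of the parameter $\boldsymbol{\theta}$ (which here is the weak topology), obtained through Fatou's lemma applied to the series defining the Kullback--Leibler divergence at iid product distributions. Measurability of $\boldsymbol{P} \mapsto P^n(x_{1:n}) = \prod_{i=1}^n \theta_{x_i}$ is automatic since this quantity depends only on finitely many coordinates $\theta_i$. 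Once these ingredients are in place, both sides of the finite-alphabet identity pass to their limits and yield the claimed equality.
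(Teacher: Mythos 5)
The first inequality and the finite-dimensional minimax step are essentially fine (for Sion's theorem you only need compactness on one side, e.g.\ the simplex of coding distributions on $\{1,\dots,k\}^n$, together with convexity and lower semicontinuity of $Q^n \mapsto \int D(P^n;Q^n)\,\ud\mu$ and affinity in $\mu$; the possibility $D=+\infty$ needs a word but is harmless). The genuine gap is the truncation step and the passage to the limit. Defining $\Lambda_k$ as the sources of $\Lambda$ whose marginal is supported on $\{1,\dots,k\}$ does not work for a general measurable $\Theta_\Lambda$: the class $\Lambda$ need not contain \emph{any} finitely supported source (take $\Lambda$ to be a family of geometric distributions), so $\Lambda_k$ can be empty for every $k$, and in any case there is no reason that $\lim_k R_n(\Lambda_k) = R_n(\Lambda)$; one only gets $R_n(\Lambda_k)\leq R_n(\Lambda)$, possibly with a strict gap in the limit. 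The lower-semicontinuity/Fatou argument you invoke concerns continuity of $\boldsymbol{P}\mapsto R_n(Q^n;P^n)$ for a fixed $Q^n$, but what the argument would actually require is that every source of $\Lambda$ be approximable by finitely supported sources \emph{belonging to} $\Lambda$ (or that near-maximin priors for $\Lambda$ be approximable by priors charging only $\Lambda_k$), and neither is available under the hypotheses. Truncating the sources themselves (e.g.\ mapping $X$ to $\min(X,k)$) would leave $\Lambda$, so the resulting priors are not priors on $\Theta_\Lambda$, and the proposal does not explain how to transfer them back.

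For comparison: the paper does not prove this theorem at all; it quotes it from Gallager and Davisson--Leon-Garcia (finite case) and Haussler's general minimax result for relative entropy. The standard route there reduces to \emph{finite subsets of the class of sources} rather than finite sub-alphabets: for a finite set of parameters the priors form a compact finite-dimensional simplex and the classical minimax exchange applies, and the hard step is then to show that $\sup$ over finite subclasses of the minimax redundancy equals the minimax redundancy over all of $\Lambda$ --- a step established using specific properties of the relative entropy (this is exactly the analogue of the limit $k\to\infty$ that your sketch leaves unproved). Replacing your alphabet truncation by this reduction to finite parameter subsets, and then supplying that approximation argument, is what a complete proof requires.
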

The quantity $\sup_{\mu} R_{n,\mu}(\Lambda)$ is called \emph{maximin redundancy}. A prior whose Bayes redundancy corresponds to the maximin redundancy is said to be maximin, or least favorable. \\
Theorem~\ref{thm:MinimaxMaximin} says that maximin redundancy and minimax redundancy are the same. It provides a tool to calculate the minimax redundancy.

Before speaking about known results, let us make mention of other two notions. \\
With an asymptotic point of view, a sequence of coding probabilities $(Q_n)_{n\geq 1}$ is said to be weakly universal if the per-symbol redundancy tends to $0$ on $\Lambda$: $\sup_{\boldsymbol{P} \in \Lambda} \lim_{n\rightarrow \infty} \frac{1}{n} D(P^n; Q^n) = 0$. \\
Instead of the expected redundancy, many authors 
consider individual sequences. In this case, the \emph{minimax regret}
\[R_n^{*}(\Lambda) = \inf_{Q^n} \sup_{\boldsymbol{P}\in\Lambda} \sup_{x_{1:n}\in {\mathcal X}^n} \log \frac{P^n(x_{1:n})}{Q^n(x_{1:n})} \]
plays the role that the minimax redundancy plays with the expected redundancy.

\subsection{Exponentially decreasing envelope classes}

In the case of a finite alphabet of size $k$, many classes of sources have been studied in the literature, for which estimates of the redundancy have been provided. In particular we have the class of all iid sources (see \cite{eec_IEEE:Krichevsky81,eec_IEEE:XieB97,eec_IEEE:XieB00,eec_IEEE:BarronRissanenYu98,eec_IEEE:Catoni2001,eec_IEEE:DrmotaS04}, and references therein), whose minimax redundancy is
\[\frac{k-1}{2} \log \frac{n}{2\pi e} + \log \frac{\Gamma(1/2)^k}{\Gamma(k/2)} + o(1).\]
This last class can be seen as a particular case of a $(k-1)$-dimensional class of iid sources on a (possibly) bigger alphabet, for which we have a similar result under certain conditions (see \cite{eec_IEEE:ClarkeB90,eec_IEEE:ClarkeB94,eec_IEEE:Barron98}). Similar results are still available for classes of Markov processes and finite memory tree sources on a finite alphabet (see \cite{eec_IEEE:Krichevsky81,eec_IEEE:Davisson83,eec_IEEE:willems95contexttree,eec_IEEE:Atteson99}), and for $k$-dimensional classes of even non-iid sources on an arbitrary alphabet (see \cite{eec_IEEE:Rissanen84}). \\
The results become less precise when one considers infinite dimensional classes on a finite alphabet. A typical example is the class of renewal processes, for which we do not have an equivalent of the expected redundancy, but we know that it is lower and upper bounded by a constant times $\sqrt{n}$ (see \cite{eec_IEEE:CsiszarShields96,eec_IEEE:FlajoletS02}). \\
Eventually, it is well known that the class of stationary ergodic sources on a finite alphabet is weakly universal (see \cite{eec_IEEE:cover_thomas_91}). However, Shields \cite{eec_IEEE:Shields93} showed that this class does not admit non-trivial universal redundancy rates.

In the case of a countably infinite alphabet, the situation is significantly different. Even the class of all iid sources is not weakly universal (see \cite{eec_IEEE:Kieffer78,eec_IEEE:GyorfiPM94}). Kieffer characterized weakly universal classes in \cite{eec_IEEE:Kieffer78} (see also \cite{eec_IEEE:GyorfiPM93,eec_IEEE:GyorfiPM94}):
\begin{prop}
 A class $\Lambda$ of stationary sources on $\N_{*}$ is weakly universal if and only if there exists a probability distribution $Q$ on $\N_{*}$ such that for every $\boldsymbol{P} \in \Lambda$, $D(P; Q) < \infty$.
\end{prop}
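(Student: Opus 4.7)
The plan is to treat the two implications separately, the converse being substantially harder.

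For the direct implication ($\Rightarrow$), suppose $(Q^n)_{n\geq 1}$ is weakly universal on $\Lambda$, and let $Q^n_1$ denote the first-coordinate marginal of $Q^n$. A natural candidate for the single marginal $Q$ is the geometric mixture
\[ Q \;=\; \sum_{n \geq 1} 2^{-n}\, Q^n_1. \]
For any $\boldsymbol{P} \in \Lambda$ with marginal $P$, the data processing inequality applied to projection on the first coordinate gives $D(P; Q^n_1) \leq D(P^n; Q^n)$, and the pointwise domination $Q \geq 2^{-n} Q^n_1$ yields
\[ D(P;Q) \;\leq\; D(P; Q^n_1) + n \;\leq\; D(P^n; Q^n) + n. \]
Since $\tfrac{1}{n} D(P^n; Q^n) \to 0$, the right-hand side is finite for some $n$, whence $D(P;Q) < \infty$.

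For the converse ($\Leftarrow$), given $Q$, I would combine two ingredients: first, for every finite alphabet $\Omega_M = \{1,\ldots,M\}$ there is a weakly universal coding probability $\widetilde{Q}^{n,M}$ for the class of all stationary sources on $\Omega_M$ (a classical result via, e.g., finite-state or context-tree mixtures); second, the hypothesis $D(P;Q)<\infty$ controls how much mass $P$ places outside $\Omega_M$. For each $M$, split $x_{1:n}\in\N_*^n$ into its truncation $y_{1:n}\in (\Omega_M \cup \{\star\})^n$, with $y_i=\star$ iff $x_i\notin\Omega_M$, and the ordered list $z_1,\ldots,z_k$ of outlying values, and set
\[ Q^n_M(x_{1:n}) \;=\; \widetilde{Q}^{n, M+1}(y_{1:n}) \prod_{j=1}^{k} \frac{Q(z_j)}{Q(\N_* \setminus \Omega_M)}, \]
which is a bona fide probability on $\N_*^n$. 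Take $Q^n = \sum_{M\geq 1}\nu(M)\,Q^n_M$ with $\nu$ any fully supported prior on $\N_*$. Using $Q^n \geq \nu(M)\,Q^n_M$, the bound $H(Y_{1:n})\leq H(X_{1:n})$ (since $Y$ is a function of $X$), and the fact that the $z$-cost has expectation at most $n\,\E_P[(-\log Q(X_1))\1_{X_1\notin \Omega_M}]$, I obtain, for every $M$,
\[ \tfrac{1}{n} D(P^n; Q^n) \;\leq\; \tfrac{-\log\nu(M)}{n} + \tfrac{1}{n} D(P^n_Y; \widetilde{Q}^{n,M+1}) + \E_P\!\left[(-\log Q(X_1))\,\1_{X_1\notin\Omega_M}\right]. \]
Letting $n\to\infty$ at fixed $M$ kills the first two terms (the second because $(Y_i)$ is itself stationary on $\Omega_M\cup\{\star\}$, so the finite-alphabet weak universality of $\widetilde{Q}^{n,M+1}$ applies), leaving the residual $\E_P[(-\log Q(X_1))\1_{X_1\notin\Omega_M}]$; this in turn vanishes as $M\to\infty$ by dominated convergence, with envelope $-\log Q(X_1)$, which is $P$-integrable because $\E_P[-\log Q(X_1)] = H(P)+D(P;Q)<\infty$.

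The delicate point is that the finite-alphabet weak universality on $\Omega_{M+1}$ is \emph{not} known to be uniform in $M$, so one cannot naively let $M=M_n$ depend on $n$; the mixture over $M$ together with the order of limits (first $n\to\infty$, then $M\to\infty$) is exactly what circumvents this obstruction, and it is also the reason the argument only produces weak pointwise-in-$P$ universality rather than a uniform redundancy rate. The finite-alphabet weak universality invoked as a black box is itself a non-trivial classical fact.
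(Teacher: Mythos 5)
Your forward implication is fine: data processing onto the first coordinate plus the geometric mixture $Q=\sum_n 2^{-n}Q^n_1$ is a clean and correct argument. (For reference, the paper does not prove this proposition at all; it quotes it from Kieffer, so only your argument is at issue here.)

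The converse has a genuine gap at its final step. You need $\E_P\bigl[(-\log Q(X_1))\1_{X_1\notin\Omega_M}\bigr]\to 0$ as $M\to\infty$, and you justify the dominated-convergence envelope by writing $\E_P[-\log Q(X_1)]=H(P)+D(P;Q)<\infty$. But $D(P;Q)<\infty$ does not imply $H(P)<\infty$: take $P=Q$ with $H(Q)=\infty$ (e.g.\ $Q(k)$ proportional to $1/(k\log^2 k)$); then $D(P;Q)=0$ while $\E_P[-\log Q(X_1)]=H(P)=\infty$, so the residual term in your inequality equals $+\infty$ for \emph{every} $M$ and the chain of bounds yields nothing. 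The proposition as stated places no finite-entropy restriction on the marginals, and these infinite-entropy sources are exactly the delicate cases the divergence formulation is designed to cover.

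Nor is this merely a loose estimate that a sharper computation fixes: the infinity enters when you discard $H(X_{1:n})-H(Y_{1:n})=H(X_{1:n}\mid Y_{1:n})$ and encode the outlying values independently with $Q(\cdot\mid\;\cdot>M)$. For a \emph{memoryless} $P$ you can keep that conditional entropy, and the outlier cost then compares to it, leaving about $P(X_1>M)\,D\bigl(P(\cdot\mid X_1>M);Q(\cdot\mid\;\cdot>M)\bigr)$ per symbol, which does tend to $0$ under $D(P;Q)<\infty$ alone; so your scheme can be repaired for i.i.d.\ classes. But for a general stationary source the large symbols may be strongly dependent (think of $X_i\equiv X_1$ with marginal $Q$ of infinite entropy): then $H(X_{1:n}\mid Y_{1:n})$ stays of order one while your per-outlier codelength has infinite mean, and no choice of $M$, prior $\nu$, or order of limits rescues the bound. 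Handling the dependence among the rare large symbols is precisely where the real content of Kieffer's theorem lies; as written, your argument establishes the converse only under the extra hypothesis $\E_P[-\log Q(X_1)]<\infty$ for all $P\in\Lambda$, or for memoryless classes after the repair just indicated. (Two minor, fixable points: you need $Q(\N_*\setminus\Omega_M)>0$ for $Q^n_M$ to make sense, and the finite-alphabet black box must cover all stationary sources, not only ergodic ones --- standard, e.g.\ via mixtures of Krichevsky--Trofimov codes over Markov orders.)
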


In the literature, we find two main ways to deal with infinite alphabets. The first one \cite{eec_IEEE:OrlitskySZ04,eec_IEEE:OrlitskyS04,eec_IEEE:JevticOS05,eec_IEEE:OrlitskySVZ06,eec_IEEE:ShamirC04,eec_IEEE:ShamirDCC04,eec_IEEE:GemelosW06,eec_IEEE:Garivier06Pattern,eec_IEEE:ChoiSzpankowski07} separates the message into two parts: a description of the symbols appearing in the message, and the \emph{pattern} they form. Then the compression of patterns is studied.

A second approach \cite{eec_IEEE:Elias75,eec_IEEE:GyorfiPM93,eec_IEEE:HeY05,eec_IEEE:FosterSW02,eec_IEEE:BouGarGas06} studies collections of sources satisfying Kieffer's condition, and proposes compression algorithms for these classes. A result from \cite{eec_IEEE:BouGarGas06} indicates us such a way:
\begin{prop}
 Let $\Lambda$ be a class of iid sources over $\N_{*}$. Let the envelope function $f$ be defined by $f(x) = \sup_{\boldsymbol{P} \in \Lambda} P(x)$. Then the minimax regret verifies
 \[ R_n^{*}(\Lambda) < \infty \Leftrightarrow \sum_{x\in \N_{*}} f(x) < \infty. \]
\end{prop}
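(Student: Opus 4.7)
\medskip
The plan is to reduce the minimax regret to the Shtarkov normalized-maximum-likelihood sum
\[ S_n := \sum_{x_{1:n}\in \N_{*}^{n}} \sup_{\boldsymbol{P}\in\Lambda} P^n(x_{1:n}) \]
via the identity $R_n^{*}(\Lambda) = \log S_n$ (with the convention $\log(+\infty)=+\infty$), which holds on any countable alphabet. First, I would establish the upper bound by choosing the NML distribution $Q^{*}(x_{1:n})=\sup_{\boldsymbol{P}}P^n(x_{1:n})/S_n$ when $S_n<\infty$; for this $Q^{*}$ the ratio $\sup_{\boldsymbol{P}}P^n/Q^{*}$ is identically $S_n$. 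The matching lower bound follows from the elementary averaging inequality $\sup_{x} a_x/b_x \geq \bigl(\sum_x a_x\bigr)/\bigl(\sum_x b_x\bigr)$ applied with $a_{x_{1:n}}=\sup_{\boldsymbol{P}}P^n(x_{1:n})$ and $b_{x_{1:n}}=Q^n(x_{1:n})$, the latter summing to one. Thus the proposition reduces to showing $S_n<\infty \Leftrightarrow \sum_x f(x) <\infty$.

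The direction $\sum_x f(x)<\infty \Rightarrow S_n<\infty$ is immediate: since each source in $\Lambda$ is iid,
\[ \sup_{\boldsymbol{P}\in\Lambda} P^n(x_{1:n}) \leq \prod_{i=1}^n \sup_{\boldsymbol{P}\in\Lambda} P(x_i) = \prod_{i=1}^n f(x_i), \]
and summing over $x_{1:n}\in\N_{*}^{n}$ gives $S_n\leq \bigl(\sum_x f(x)\bigr)^{n}<\infty$.

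For the converse I would build a matching lower bound of the form $S_n \geq \tfrac{1}{2}\sum_x f(x)$. For each $y\in\N_{*}$ with $f(y)>0$, by definition of the supremum I pick a source $P_y\in\Lambda$ with $P_y(y)\geq f(y)/2$. Since $P_y\in\Lambda$, the Shtarkov sum restricted to sequences beginning with $y$ satisfies
\[ \sum_{x_{2:n}\in\N_{*}^{n-1}} \sup_{\boldsymbol{P}} P^n(y,x_{2:n}) \geq \sum_{x_{2:n}} P_y^n(y,x_{2:n}) = P_y(y)\Bigl(\sum_{x\in\N_{*}} P_y(x)\Bigr)^{n-1}=P_y(y), \]
because $P_y$ is a probability distribution on $\N_{*}$. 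Summing over $y$ yields $S_n\geq \sum_y P_y(y)\geq \tfrac{1}{2}\sum_y f(y)$, so $S_n<\infty$ forces $\sum_y f(y)<\infty$.

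The main obstacle I foresee is the converse: it is not obvious a priori that summability of the \emph{one-dimensional} envelope should control the $n$-dimensional Shtarkov sum. The key is to exploit the freedom in choosing $P_y$ differently for each $y$ and then to sum the remaining $n-1$ coordinates out against the marginals of $P_y$, collapsing the $n$-dimensional sum to a one-dimensional sum of envelope values. No measurable selection is required since each $y$ is treated independently.
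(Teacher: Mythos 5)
Your proof is correct: the reduction $R_n^{*}(\Lambda)=\log\sum_{x_{1:n}}\sup_{\boldsymbol{P}}P^n(x_{1:n})$ via the NML distribution and the mediant inequality, the bound $\sup_{\boldsymbol{P}}P^n(x_{1:n})\leq\prod_i f(x_i)$ for one direction, and the choice of near-maximizing sources $P_y$ with marginalization over the last $n-1$ coordinates for the converse are all sound. The paper itself states this proposition without proof, citing Boucheron, Garivier and Gassiat, and your argument is essentially the same Shtarkov-sum approach used in that reference, so there is nothing substantive to contrast.
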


It is therefore quite natural to consider classes of iid sources with envelope conditions on the marginal distribution. In this article we study specific classes of iid sources introduced by \cite{eec_IEEE:BouGarGas06}, and called \emph{exponentially decreasing envelope classes}.
\begin{df}\label{def:ExpEnvClass}
   Let $C$ and $\alpha$ be positive numbers satisfying $C > e^{2\alpha}$. The exponentially decreasing envelope class $\Lambda_{C e^{-\alpha\cdot}}$ is the class of sources defined by
   \begin{equation*}
    \begin{split}
     \Lambda_{C e^{-\alpha\cdot}} &= \{ \boldsymbol{P}: \forall k \geq 1,\ P(k) \leq C e^{-\alpha k} \\
      & \qquad \text{and } \boldsymbol{P}\ \text{is stationary and memoryless.} \}
    \end{split}
   \end{equation*}
\end{df}
The first condition addresses mainly the queue of the distribution of $X_1$; it means that great numbers must be rare enough. It does not mean that the distribution is geometrical: if $C$ is big enough, many other distributions are possible. Furthermore we will see that the exact value of $C$ does not change significantly the minimax redundancy, unlike $\alpha$.

Since in this paper we are going to only talk about exponentially decreasing envelope classes, we simplify the notations $R_n(Q^n; \Lambda_{C e^{-\alpha\cdot}})$, $R_n(\Lambda_{C e^{-\alpha\cdot}})$, and $R_{n,\mu}(\Lambda_{C e^{-\alpha\cdot}})$ into $R_n(Q^n; C,\alpha)$, $R_n(C,\alpha)$, and $R_{n,\mu}(C,\alpha)$ respectively. The subset of $\Theta$ corresponding to $\Lambda_{C e^{-\alpha\cdot}}$ is denoted by 
\begin{equation} \label{eq:ThetaCalpha}
 \begin{split}
  \Theta_{C,\alpha} &= \{ \boldsymbol{\theta}=(\theta_1,\theta_2,\ldots) \in [0,1]^{\N}: \\
   &\qquad \sum_{i\geq 1} \theta_i = 1\ \text{and } \forall i \geq 1,\ \theta_i \leq C e^{-\alpha i} \}.
 \end{split}
\end{equation}


We present two main results about these classes.

In Section~\ref{sec:MinimaxRedundancy} we calculate the minimax redundancy of exponentially decreasing envelope classes, and we find that it is equivalent to $\frac{1}{4 \alpha \log e} \log^2 n$ as $n$ tends to the infinity. This rate is interesting for two main reasons. Up to our knowledge, exponentially decreasing envelope classes are the first family of classes on an infinite alphabet for which an equivalent of the minimax redundancy is known. Then, even the rate is new: until now only rates in $\log n$ or $\sqrt{n}$ have been obtained.

Once the minimax redundancy of a class of sources is known, we are interested in finding a minimax coding algorithm. Section~\ref{sec:code} proposes a new adaptive coding algorithm, and we show that its maximum redundancy is equivalent to the minimax redundancy of exponentially decreasing envelope classes.

Eventually, the Appendix contains some proofs and some auxiliary results used in the main analysis.

\section{Minimax redundancy} \label{sec:MinimaxRedundancy}

In this section we state our main result. Theorem~\ref{thm:ExpClassRedundancy} below gives an equivalent of the minimax redundancy of exponentially decreasing envelope classes. To get it, we use a result due to Haussler and Opper \cite{eec_IEEE:HausslerOpper97}.

\begin{thm} \label{thm:ExpClassRedundancy}
 Let $C$ and $\alpha$ be positive numbers satisfying $C > e^{2 \alpha}$. The minimax redundancy of the exponentially decreasing envelope class $\Lambda_{C e^{-\alpha\cdot}}$ verifies
 \[ R_n(C,\alpha) \underset{n \rightarrow \infty}{\sim} \frac{1}{4 \alpha \log e} \log^2 n. \]
\end{thm}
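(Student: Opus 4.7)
The plan is to invoke the theorem of Haussler and Opper~\cite{eec_IEEE:HausslerOpper97}, which for sufficiently regular iid classes expresses the asymptotic minimax redundancy through the Hellinger $\epsilon$-metric entropy of the parameter set. More precisely, their analysis identifies a critical scale $\epsilon_n$ at which $\ln N(\epsilon_n, \Theta_{C,\alpha}, d_H)$ and $n\epsilon_n^{2}$ (both in nats) balance, and delivers two-sided bounds of the form $R_n(C,\alpha)\sim \ln N(\epsilon_n, \Theta_{C,\alpha}, d_H)$, with the usual conversion to bits via $\ln n = \log n/\log e$. The bulk of the work is thus to compute the Hellinger metric entropy of $\Theta_{C,\alpha}$, after checking that the envelope forces the class to be Hellinger-compact so that the Haussler--Opper hypotheses apply.

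I would evaluate this entropy by a truncate-and-cover argument. Since $\sum_{i>d}\theta_i \leq C e^{-\alpha d}/(1-e^{-\alpha})$, the tail beyond the index $d(\epsilon):=\lceil \frac{2}{\alpha}\ln(1/\epsilon)\rceil$ contributes $O(\epsilon^2)$ to the squared Hellinger distance uniformly on $\Theta_{C,\alpha}$, so it suffices to cover the first $d(\epsilon)$ coordinates. For each such $i$, the envelope confines $\sqrt{\theta_i}$ to $[0,\sqrt{C}\,e^{-\alpha i/2}]$, an interval whose length decays geometrically in $i$. Covering each coordinate independently at scale $\epsilon/\sqrt{d(\epsilon)}$ and multiplying the counts gives
\[\ln N(\epsilon,\Theta_{C,\alpha},d_H) \leq d(\epsilon)\ln\frac{1}{\epsilon} \,-\, \frac{\alpha\, d(\epsilon)^{2}}{4} + o\bigl(\ln^{2}(1/\epsilon)\bigr) \,\sim\, \frac{1}{\alpha}\ln^{2}\frac{1}{\epsilon}.\]
The negative contribution $-\alpha d(\epsilon)^{2}/4$ records the volume reduction induced by the envelope and is exactly what turns a naive constant $\frac{1}{2\alpha}$ into the correct $\frac{1}{4\alpha}$; without it the Hellinger metric entropy would be twice as large. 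A matching packing lower bound comes from placing a regular $\epsilon$-separated grid on a $d(\epsilon)$-dimensional sub-simplex strictly interior to $\Theta_{C,\alpha}$---here the hypothesis $C>e^{2\alpha}$ supplies exactly the slack needed for the grid to fit below the envelope. (Alternatively, the lower bound on $R_n(C,\alpha)$ could be obtained directly from Theorem~\ref{thm:MinimaxMaximin} by evaluating the Bayes redundancy against a Dirichlet-type prior on this sub-simplex via Laplace asymptotics.)

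Plugging the entropy estimate into the Haussler--Opper fixed-point relation gives $\epsilon_n \sim (\ln n)/(2\sqrt{\alpha n})$, whence $\ln(1/\epsilon_n)\sim \frac{1}{2}\ln n$ and $R_n(C,\alpha) \sim \ln^2 n/(4\alpha)$ in nats, that is, the announced equivalent $\frac{1}{4\alpha\log e}\log^{2}n$ in bits. The main obstacles I foresee are the precise bookkeeping of the volumetric correction $-\alpha d(\epsilon)^2/4$---which must stay at leading order rather than being absorbed into lower-order terms, and is where the structure of the envelope really enters the constant---together with the construction of a packing genuinely contained in $\Theta_{C,\alpha}$ (the constraint $C>e^{2\alpha}$ becomes essential here), and the verification of Haussler and Opper's technical assumptions (Hellinger total-boundedness and a uniform tail integrability condition), which follow from the envelope but deserve a careful check.
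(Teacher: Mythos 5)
Your proposal follows essentially the same route as the paper: compute the Hellinger metric entropy of $\Theta_{C,\alpha}$ by truncation at depth $\sim\frac{2}{\alpha}\ln(1/\epsilon)$, with the geometric shrinking of the coordinate ranges supplying the $-\frac{\alpha}{4}d(\epsilon)^2$ volumetric correction that yields ${\mathcal H}_\epsilon \sim \frac{1}{\alpha}\ln^2(1/\epsilon)$, a matching finite-dimensional packing kept inside the class (where $C>e^{2\alpha}$ is used, in the paper via shifting the active coordinates past $l_f$ so the leftover mass can be absorbed on the unconstrained first coordinates), and then the Haussler--Opper theorem with $h(x)=\frac{1}{\alpha}\ln^2 x$ evaluated at the scale $\ln(1/\epsilon_n)\sim\frac12\ln n$, giving $\frac{1}{4\alpha\log e}\log^2 n$. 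The minor differences (coordinatewise covering versus the paper's volume-ratio bounds, fixed-point phrasing of Haussler--Opper versus its ``case 1'' statement) do not change the argument, and you correctly identify the two delicate points the paper also handles.
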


Theorem~\ref{thm:ExpClassRedundancy} improves on a preceding result of \cite[Theorem~7]{eec_IEEE:BouGarGas06}. In that article the following bounds of the minimax redundancy of exponentially decreasing envelope classes are given:
\begin{equation*}
 \begin{split}
  &\frac{1}{8 \alpha \log e} \log^2 n\, (1+o(1)) \\
   &\qquad \leq R_n(C,\alpha) \\
   &\qquad \leq \frac{1}{2 \alpha \log e} \log^2 n + O(1).
 \end{split}
\end{equation*}

In subsection~\ref{sec:Haussler} we outline the work done in \cite{eec_IEEE:HausslerOpper97}, and then we use it in subsection~\ref{sec:EnvClassRedundancy} to prove Theorem~\ref{thm:ExpClassRedundancy}. 
Eventually, we discuss in subsection~\ref{sec:PolyEnvClasses} the adaptation of this method to other envelope classes.

\subsection{From the metric entropy to the minimax redundancy} \label{sec:Haussler}

To study the redundancy of a class of sources, \cite{eec_IEEE:HausslerOpper97} considers the Hellinger distance between the first marginal distributions of each source. Bounds on the minimax redundancy are provided in terms of the metric entropy of the set of the first marginal distributions, with respect to the Hellinger distance. As a consequence, that method can be applied only to iid sources. However it is very efficient in the case of exponentially decreasing envelope classes.

First, we need to define the Hellinger distance and the metric entropy. In the case of sources on a countably infinite alphabet, the Hellinger distance can be defined in the following way:
\begin{df}
 Let $P$ and $Q$ two probability distributions on $\N_{*}$. Then the Hellinger distance between $P$ and $Q$ is defined by
 \[ h(P,Q) = \sqrt{ \sum_{k\geq 1} \left( \sqrt{P(k)} - \sqrt{Q(k)} \right)^2 }. \]
\end{df}

A related metric can be defined on the parameter set $\Theta$:
\[ d(\boldsymbol{\theta},\boldsymbol{\theta}') = h(P_{\boldsymbol{\theta}},P_{\boldsymbol{\theta}'}) = \sqrt{ \sum_{k\geq 1} \left( \sqrt{\theta_k} - \sqrt{\theta'_k} \right)^2 }. \]

From a metric we can define the \emph{metric entropy}. We need to define first some numbers.
\begin{df}
 Let $S$ be a subset of $\Theta$, and $\epsilon$ be a positive number.
 \begin{enumerate}
  \item We denote by ${\mathcal D}_{\epsilon}(S,d)$ the cardinality of the smallest finite partition of $S$ with sets of diameter at most $\epsilon$, or we set ${\mathcal D}_{\epsilon}(S,d)=\infty$ if no such finite partition exists.
  \item The metric entropy of $(S,d)$ is defined by
   \[ {\mathcal H}_{\epsilon}(S,d) = \ln {\mathcal D}_{\epsilon}(S,d).\,\footnote{We follow \cite{eec_IEEE:HausslerOpper97} in this definition of the metric entropy. Several authors use a slightly different definition, based on the covering number or the packing number.} \]
  \item An \emph{$\epsilon$-cover} of $S$ is a subset $A\subset S$ such that, for all $x$ in $S$, there is an element $y$ of $A$ with $d(x,y)<\epsilon$. The \emph{covering number} ${\mathcal N}_{\epsilon}(S,d)$ is the cardinality of the smallest finite $\epsilon$-cover of $S$, or we define ${\mathcal N}_{\epsilon}(S,d)=\infty$ if no finite $\epsilon$-cover exists.
  \item An \emph{$\epsilon$-separated subset} of $S$ is a subset $A\subset S$ such that, for all distinct $x,\,y$ in $A$, $d(x,y)>\epsilon$. The \emph{packing number} ${\mathcal M}_{\epsilon}(S,d)$ is the cardinality of the largest finite $\epsilon$-separated subset of $S$, or we define ${\mathcal M}_{\epsilon}(S,d)=\infty$ if arbitrary large $\epsilon$-separated subsets exist.
 \end{enumerate}
\end{df}

The following lemma explains how these numbers are linked. It is a classical result that can be found for instance in \cite{eec_IEEE:VaartWellner96}.
\begin{lem}\label{lem:numbers}
 Let $S$ be a subset of $\Theta$. For all $\epsilon>0$,
 \[ {\mathcal M}_{2 \epsilon}(S,d) \leq {\mathcal D}_{2 \epsilon}(S,d) \leq {\mathcal N}_{\epsilon}(S,d) \leq {\mathcal M}_{\epsilon}(S,d). \]
\end{lem}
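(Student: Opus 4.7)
The plan is to prove the three inequalities independently; each follows from the definitions together with at most one application of the triangle inequality.

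For ${\mathcal M}_{2\epsilon}(S,d) \leq {\mathcal D}_{2\epsilon}(S,d)$, I would fix any partition of $S$ into cells of diameter at most $2\epsilon$, and observe that any $2\epsilon$-separated subset $A$ contains at most one element per cell: two distinct elements of $A$ lying in the same cell would be at distance simultaneously $\leq 2\epsilon$ (by the diameter bound) and $> 2\epsilon$ (by the separation condition). Taking a minimal partition then gives the bound.

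For ${\mathcal D}_{2\epsilon}(S,d) \leq {\mathcal N}_{\epsilon}(S,d)$, starting from an $\epsilon$-cover $\{y_1,\dots,y_N\}$ of $S$, I would form a Voronoi-type partition
\[ V_i \;=\; \bigl\{\,x \in S : d(x,y_i) < \epsilon\,\bigr\} \;\setminus\; \bigcup_{j<i} V_j . \]
Each $V_i$ lies inside the open ball of radius $\epsilon$ around $y_i$, hence has diameter at most $2\epsilon$ by the triangle inequality, and the $\epsilon$-cover property guarantees that the $V_i$ partition $S$. Hence $S$ admits a partition of size at most $N$ with cells of diameter at most $2\epsilon$.

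For ${\mathcal N}_{\epsilon}(S,d) \leq {\mathcal M}_{\epsilon}(S,d)$, I would take a maximal $\epsilon$-separated subset $A \subseteq S$ (finite when ${\mathcal M}_{\epsilon}(S,d)<\infty$, otherwise the inequality is trivial) and argue that $A$ must be an $\epsilon$-cover. Indeed, if some $x \in S$ satisfied $d(x,y) > \epsilon$ for every $y \in A$, then $A \cup \{x\}$ would still be $\epsilon$-separated, contradicting maximality of $A$. The only genuinely delicate point is reconciling the strict inequalities used in the definitions of $\epsilon$-cover and $\epsilon$-separated set (and the equality case $d(x,y)=\epsilon$ in the last step); this is standard and does not affect the cardinality bounds, so I do not expect a real obstacle anywhere in the argument.
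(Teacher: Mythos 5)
The paper never proves this lemma; it simply points to van der Vaart and Wellner, so your argument is being compared to a citation rather than to a written proof, and what you give is indeed the standard argument. Your first two inequalities are complete and correct: a $2\epsilon$-separated set meets each cell of a partition into sets of diameter at most $2\epsilon$ in at most one point (here the strict ``$>2\epsilon$'' in the separation definition is exactly what makes the contradiction work), and the Voronoi-type refinement of a finite $\epsilon$-cover, after discarding empty cells, is a finite partition into sets of diameter at most $2\epsilon$, giving ${\mathcal D}_{2\epsilon}(S,d)\le {\mathcal N}_{\epsilon}(S,d)$; both arguments also handle the infinite cases trivially.

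The place where your closing remark is too quick is the last inequality, and the ``delicate point'' you defer is not merely cosmetic: with the paper's own conventions (covers require $d(x,y)<\epsilon$, separation requires $d(x,y)>\epsilon$), maximality of an $\epsilon$-separated set $A$ only yields, for each $x\in S$, some $y\in A$ with $d(x,y)\le\epsilon$, i.e.\ a closed-ball cover, and the literal inequality ${\mathcal N}_{\epsilon}(S,d)\le{\mathcal M}_{\epsilon}(S,d)$ can actually fail at boundary distances. For instance, if $S=\{\boldsymbol{\theta},\boldsymbol{\theta}'\}$ with $d(\boldsymbol{\theta},\boldsymbol{\theta}')=\epsilon$ exactly (achievable in $(\Theta,d)$ for any $\epsilon<\sqrt{2}$), then ${\mathcal M}_{\epsilon}(S,d)=1$ while ${\mathcal N}_{\epsilon}(S,d)=2$, since neither singleton is an $\epsilon$-cover in the strict sense. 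So no amount of care closes this step as literally stated; the honest repair is to record what your maximality argument really gives, namely ${\mathcal N}_{\epsilon'}(S,d)\le{\mathcal M}_{\epsilon}(S,d)$ for every $\epsilon'>\epsilon$ (equivalently, the inequality with closed-ball covers). That version is all the paper uses downstream: the lemma only enters through the asymptotics of ${\mathcal H}_{\epsilon}(\Theta_{C,\alpha},d)$ as $\epsilon\to 0$, where replacing $\epsilon$ by a slightly larger $\epsilon'$ changes nothing. You should state that qualification explicitly rather than assert that the strict/non-strict mismatch ``does not affect the cardinality bounds,'' because in edge cases it does.
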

Lemma~\ref{lem:numbers} enables us to choose the most convenient number to calculate the metric entropy.

From the metric entropy one can define the notion of metric dimension, which generalizes the classical notion of dimension. But the metric entropy lets us know in some way how dense the elements are in a set, even infinite dimensional.

Another quantity that \cite{eec_IEEE:HausslerOpper97} uses is the \emph{minimax risk for the $(1+\lambda)$-affinity}
\[ R_\lambda(\Lambda) = \inf_{Q} \sup_{\boldsymbol{\theta}\in \Theta_{\lambda}} \sum_{k\geq 1} P_{\boldsymbol{\theta}}(k)^{1+\lambda} Q(k)^{-\lambda}, \]
defined for all $\lambda>0$. \\
More precisions about the $(1+\lambda)$-affinity are given in 
\cite{eec_IEEE:HausslerOpper97}%
. See also \cite{eec_IEEE:BontempsrapportM2} for a special regard payed to envelope classes.

In the case of an envelope class $\Lambda_f$ defined by an integrable envelope function $f$, it is easy to see that $R_\lambda<\infty$ for all $\lambda>0$. Indeed the choice 
\[ Q(k) = \frac{f(k)}{\sum_{l\geq 1} f(l)} \]
leads to the relation
\[ R_\lambda \leq \left( \sum_{k\geq 1} f(k) \right)^{\lambda}. \]

We can now write a slightly modified version\,\footnote{The separation of the upper and lower bounds have no effect on the proof given by Haussler and Opper. A complete justification is available in \cite{eec_IEEE:BontempsrapportM2}.} of Theorem~5 of \cite{eec_IEEE:HausslerOpper97} in the context of data compression on an infinite alphabet.
\begin{thm}\label{thm:HausslerThm5}
 Let $\Lambda$ be a class of iid sources on $\N_{*}$, such that the parameter set $\Theta_{\Lambda}$ is a measurable subset of $\Theta$. Assume that there exists $\lambda>0$ such that $R_\lambda<\infty$. 
 Let $h(x)$ be a continuous, non-decreasing function defined on the positive reals such that, for all $\gamma\geq 0$ and $C>0$,
 \begin{enumerate}
  \item \[ \lim_{x\rightarrow\infty} \frac{h\left(C x \left(h(x)\right)^\gamma\right)}{h(x)} = 1 \]
   and
  \item \[ \lim_{x\rightarrow\infty} \frac{h\left(C x (\ln x)^\gamma\right)}{h(x)} = 1. \]
 \end{enumerate}
 Then
 \begin{enumerate}
  \item If
   \[ {\mathcal H}_{\epsilon}(\Theta_{\Lambda},d) \underset{\epsilon \rightarrow 0}{\sim} h\left(\frac{1}{\epsilon}\right), \]
   then
   \[ R_n(\Lambda) \underset{n \rightarrow \infty}{\sim} (\log e)\, h(\sqrt{n}).\,\footnote{The $(\log e)$ factor comes from the use of the logarithm taken to base $2$, in the definition of $R_n$.} \]
  \item\label{thm5:lower} If, for some $\alpha>0$ and $c>0$,
   \[ \liminf_{\epsilon\rightarrow 0} \frac{{\mathcal H}_\epsilon(\Theta_{\Lambda},d)}{ (1/\epsilon)^{\alpha} h(1/\epsilon)} \geq c, \]
   then
   \[ \liminf_{n\rightarrow\infty} \frac{R_n(\Lambda)}{n^{\alpha/(\alpha+2)} \left[ h(n^{1/(\alpha+2)}) \right]^{2/(\alpha+2)}} > 0. \]
  \item\label{thm5:upper} If, for some $\alpha>0$ and $C>0$,
   \[ \limsup_{\epsilon\rightarrow 0} \frac{{\mathcal H}_\epsilon(\Theta_{\Lambda},d)}{ (1/\epsilon)^{\alpha} h(1/\epsilon)} \leq C, \]
   then
   \[ \limsup_{n\rightarrow\infty} \frac{R_n(\Lambda)}{(n \ln n)^{\alpha/(\alpha+2)} \left[ h(n^{1/(\alpha+2)}) \right]^{2/(\alpha+2)}} < \infty. \]
 \end{enumerate}
\end{thm}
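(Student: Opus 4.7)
The plan is to prove this reformulation of Haussler--Opper's Theorem~5 by combining the minimax--maximin duality (Theorem~\ref{thm:MinimaxMaximin}) with a packing/covering analysis of $\Theta_\Lambda$ under the Hellinger metric $d$, while invoking the hypothesis $R_\lambda < \infty$ to reconcile Hellinger distance with Kullback--Leibler divergence on the infinite alphabet $\N_{*}$. The regularity conditions (1) and (2) on $h$ are tailored so that the optimization of a single scale parameter $\epsilon$ closes up asymptotically.

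For the lower bounds in parts 1 and 2, I would fix $\epsilon > 0$, pick a maximal $2\epsilon$-separated subset $T_\epsilon \subset \Theta_\Lambda$, which by Lemma~\ref{lem:numbers} has cardinality at least $\exp({\mathcal H}_{2\epsilon}(\Theta_\Lambda,d))$, and place the uniform prior $\mu_\epsilon$ on $T_\epsilon$. Theorem~\ref{thm:MinimaxMaximin} yields
\[ R_n(\Lambda) \geq R_{n,\mu_\epsilon}(\Lambda) = I(\Theta; X_{1:n}), \]
with $\Theta \sim \mu_\epsilon$ and $X_{1:n}$ iid from $P_\Theta$. A Fano-type inequality then gives $I(\Theta; X_{1:n}) \geq (1 - P_e) \log |T_\epsilon| - \log 2$, where $P_e$ is the MAP error probability; pairwise two-point testing together with the tensorization $h^2(P^n,Q^n) \leq n\, h^2(P,Q)$ forces $P_e \to 0$ as soon as $n\epsilon^2$ dominates $\log |T_\epsilon|$. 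Choosing $\epsilon_n$ so that $n \epsilon_n^2 \sim {\mathcal H}_{\epsilon_n}(\Theta_\Lambda, d) \sim h(1/\epsilon_n)$, the regularity conditions on $h$ make $\epsilon_n \asymp 1/\sqrt{n}$ up to a slowly varying correction, and one reads off $R_n(\Lambda) \gtrsim (\log e)\, h(\sqrt{n})$ in part 1 and the polynomial analogue in part 2.

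For the upper bounds in parts 1 and 3, I would take an $\epsilon$-cover $A_\epsilon$ of $\Theta_\Lambda$ of size ${\mathcal N}_\epsilon(\Theta_\Lambda, d) \leq \exp({\mathcal H}_\epsilon(\Theta_\Lambda, d))$ and consider the Bayes mixture $M^n = |A_\epsilon|^{-1} \sum_{\theta' \in A_\epsilon} P_{\theta'}^n$. For each $\theta \in \Theta_\Lambda$ with nearest representative $\theta^\star \in A_\epsilon$,
\[ D(P_\theta^n; M^n) \leq \log |A_\epsilon| + n\, D(P_\theta; P_{\theta^\star}). \]
The crucial step is to upgrade the Hellinger bound $d(\theta,\theta^\star) < \epsilon$ into a KL bound of the form $D(P_\theta; P_{\theta^\star}) \lesssim \epsilon^2$, up to at most logarithmic factors. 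Optimizing $\epsilon$ so that $\log |A_\epsilon| \sim n\epsilon^2$ then yields the desired upper bounds, matched to the lower ones by the same regularity conditions on $h$.

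The main obstacle is precisely this passage from Hellinger to KL on an infinite alphabet. The finite-alphabet shortcut $D(P; Q) \leq \bigl(\sup_k P(k)/Q(k)\bigr)\, h^2(P, Q)$ fails on $\N_{*}$ because the supremum is generically infinite; the role of $R_\lambda < \infty$ is to supply a uniform $(1+\lambda)$-moment bound on the likelihood ratios against a suitable dominating distribution $Q$. A Hölder-type interpolation between the $L^2$ control that Hellinger distance provides and the $L^{1+\lambda}$ control coming from the $(1+\lambda)$-affinity then delivers $D(P_\theta; P_{\theta^\star}) \lesssim h^2(P_\theta, P_{\theta^\star})$ up to factors that the regularity hypotheses on $h$ absorb after the final optimization in $\epsilon$; tensorization of the KL divergence completes the upper bound.
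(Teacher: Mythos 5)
First, a point of calibration: the paper does not actually prove Theorem~\ref{thm:HausslerThm5}. It imports it from Haussler and Opper \cite{eec_IEEE:HausslerOpper97} (their Theorem~5), with a footnote delegating the justification of the separated upper and lower bounds to \cite{eec_IEEE:BontempsrapportM2}. So your proposal has to be measured against the Haussler--Opper argument, whose overall architecture you have reconstructed correctly: minimax $=$ maximin via Theorem~\ref{thm:MinimaxMaximin}, a Fano-type lower bound with a uniform prior on a maximal packing (using Lemma~\ref{lem:numbers} and tensorization of Hellinger affinity), a mixture over an $\epsilon$-cover for the upper bound, and the regularity conditions on $h$ absorbing constant and polylogarithmic changes of scale --- which is exactly why case~1 yields an equivalence while case~3 carries the extra $(\ln n)^{\alpha/(\alpha+2)}$.

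There is, however, a genuine gap at precisely the step you flag as crucial. No H\"older-type interpolation between the $L^2$ (Hellinger) control and the $(1+\lambda)$-moment control supplied by $R_\lambda<\infty$ can deliver $D(P_\theta; P_{\theta^\star}) \lesssim d(\theta,\theta^\star)^2$ up to logarithmic factors, because the left side can be $+\infty$ while the right side is arbitrarily small: let $P_{\theta^\star}(k_0)=0$ at a single point where $P_\theta(k_0)>0$; both sources can belong to $\Lambda$ and satisfy a uniform $(1+\lambda)$-affinity bound against the same reference $Q$, yet $D(P_\theta;P_{\theta^\star})=\infty$. The hypothesis $R_\lambda<\infty$ controls likelihood ratios of class members against the \emph{fixed} reference $Q$, not ratios between two arbitrary members of the class, and it says nothing about the zeros of $P_{\theta^\star}$. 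The repair --- and the place where $R_\lambda$ actually enters the Haussler--Opper proof --- is to modify the coding distribution rather than to upgrade the pairwise inequality: blend each net point with $Q$, coding against $(1-\delta)P_{\theta^\star}+\delta Q$ with $\delta$ of order $1/n$ (equivalently, contaminate each mixture component with $Q$). One then obtains a bound of the shape $D\bigl(P_\theta;(1-\delta)P_{\theta^\star}+\delta Q\bigr)\le c_\lambda\,\bigl(d(\theta,\theta^\star)^2+\delta\bigr)\ln(1/\delta)$, whose $\ln n$ cost is exactly the $\ln n$ visible in part~\ref{thm5:upper} and is absorbed by hypothesis~(2) on $h$ in case~1. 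Your single sentence asserting the pointwise Hellinger-to-KL bound conceals this entire mechanism, and as stated the step fails. A secondary, fixable slip on the lower bound: at the exact balance $n\epsilon_n^2\sim{\mathcal H}_{\epsilon_n}$ the union-bound error term, of order $\exp\bigl({\mathcal H}_{2\epsilon_n}-c\,n\epsilon_n^2\bigr)$, need not vanish; you must let $n\epsilon_n^2$ dominate ${\mathcal H}_{\epsilon_n}$ by a slowly diverging factor, after which conditions~(1)--(2) guarantee $h(1/\epsilon_n)\sim h(\sqrt{n})$ and the sharp constant $(\log e)\,h(\sqrt{n})$ survives.
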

The conditions concerning the function $h$ mean that $h$ cannot grow too fast. For instance, $h$ can grow like $C (\ln x)^\beta$, with $\beta\geq 0$.

The first case in the theorem is the one we use for exponentially decreasing envelope classes. In this case, the fast decreasing envelope produces a ``not too big'' metric entropy. Theorem~\ref{thm:HausslerThm5} gives us an equivalent of the minimax redundancy of the class of sources when $n$ goes to the infinity. This turns out very useful, as it improves a preceding result of \cite{eec_IEEE:BouGarGas06}. However it is only an asymptotic result, without any convergence speed.

The second and the third items correspond to bigger classes of sources. In these cases the result is a bit less interesting: it gives a speed for the growth of the redundancy, but without the associated constant factor. Furthermore there is a gap of $(\ln n)^{\alpha/(\alpha+2)}$ between the lower bound of point~\ref{thm5:lower} and the upper bound of point~\ref{thm5:upper}. However it allows us to retrieve more or less a result of \cite{eec_IEEE:BouGarGas06} for another type of envelope classes.

We develop now these applications.

\subsection{The minimax redundancy of exponentially decreasing envelope classes} \label{sec:EnvClassRedundancy}

Here we want to prove Theorem~\ref{thm:ExpClassRedundancy} by applying Theorem~\ref{thm:HausslerThm5}. Thus we have to calculate the metric entropy of exponentially decreasing envelope classes. This is done by Proposition~\ref{EnvClassLowerEntropy}:

\begin{prop} \label{EnvClassLowerEntropy}
 Let $C$ and $\alpha$ be positive numbers satisfying $C > e^{2 \alpha}$. The metric entropy of the parameters set $\Theta_{C,\alpha}$ verifies
 \[ {\mathcal H}_\epsilon(\Theta_{C,\alpha},d) = (1+o(1)) \frac{1}{\alpha} \ln^2 (1/\epsilon), \]
 where $o(1)$ is a function $g(\epsilon)$ such that $g(\epsilon)\rightarrow 0$ as $\epsilon\rightarrow 0$.
\end{prop}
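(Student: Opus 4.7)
To estimate $\mathcal{H}_\epsilon(\Theta_{C,\alpha}, d)$ I would first notice that the map $\theta \mapsto u := (\sqrt{\theta_i})_{i \geq 1}$ is an isometry from $(\Theta_{C,\alpha}, d)$ onto
\[
U_{C,\alpha} = \Bigl\{u \in [0,1]^{\N} : u_i \leq M_i := \sqrt{C}\, e^{-\alpha i/2},\ \|u\|_2 = 1\Bigr\}
\]
equipped with the $\ell^2$ metric. The task thus reduces to matching upper and lower bounds of the form $(1+o(1))\tfrac{1}{\alpha}\ln^2(1/\epsilon)$ on the $\ell^2$ metric entropy of $U_{C,\alpha}$; Lemma~\ref{lem:numbers} lets me pass freely between covering, partitioning and packing numbers.

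For the upper bound I would construct an explicit $\epsilon$-cover. Pick a cutoff $N = \lceil\tfrac{2}{\alpha}\ln(1/\epsilon)\rceil + O(1)$ large enough that $\sum_{i>N} M_i^2 \leq \epsilon^2/2$, and a uniform grid step $\eta = \epsilon/\sqrt{2N}$ on each coordinate $i \leq N$ (truncate $u_i = 0$ for $i > N$). The squared cover error is at most $N\eta^2 + \sum_{i>N} M_i^2 \leq \epsilon^2$, and coordinate $i$ carries at most $1 + \lfloor M_i/\eta\rfloor$ grid values. Splitting $\sum_{i=1}^N \ln(1+M_i/\eta)$ at the index $i^\ast$ where $M_{i^\ast} \asymp \eta$ (so $i^\ast \sim \tfrac{2}{\alpha}\ln(1/\epsilon)$), the terms with $i > i^\ast$ form a convergent geometric series contributing $O(1)$, while the arithmetic progression on $i \leq i^\ast$ sums to $\tfrac{\alpha}{4}(i^\ast)^2(1+o(1)) = \tfrac{1}{\alpha}\ln^2(1/\epsilon)(1+o(1))$.

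For the matching lower bound I would exhibit an $\epsilon$-separated subset. The assumption $C > e^{2\alpha}$ implies $M_1 > 1$, so the first coordinate has room to absorb normalisation slack. Fix a constant $i_1$ with $\sum_{i \geq i_1} M_i^2 \leq 1/2$ and set $i^\ast \sim \tfrac{2}{\alpha}\ln(1/\epsilon)$. For each tuple $k = (k_{i_1},\dots,k_{i^\ast})$ with $k_i \in \{0,1,\dots,\lfloor M_i/\epsilon\rfloor\}$, define $u^k \in U_{C,\alpha}$ by $u^k_i = k_i\epsilon$ on $i_1 \leq i \leq i^\ast$, $u^k_i = 0$ on $\{2,\dots,i_1-1\} \cup \{i > i^\ast\}$, and $u^k_1 = \bigl(1 - \sum_{i\geq i_1}(u^k_i)^2\bigr)^{1/2} \in [1/\sqrt{2},1] \subset [0, M_1]$. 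Distinct tuples differ by at least $\epsilon$ in some coordinate $i \geq i_1$, hence $\|u^k - u^{k'}\|_2 \geq \epsilon$. The same arithmetic-progression computation as in the upper bound yields $\ln|\{u^k\}| = \tfrac{1}{\alpha}\ln^2(1/\epsilon)(1+o(1))$.

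\emph{Main obstacle.} The delicate point is tracking the leading constant $\tfrac{1}{\alpha}$ through both bounds, not just the order of growth. The governing sum is an arithmetic progression of length $\sim \tfrac{2}{\alpha}\ln(1/\epsilon)$ with typical term $\sim \ln(1/\epsilon)$, so even a $\Theta(\ln(1/\epsilon))$ slippage in $\ln(1/\eta)$ or in the cutoff would corrupt the coefficient. Verifying that $\ln(1/\eta) = \ln(1/\epsilon) + o(\ln(1/\epsilon))$ and that the tail $i > i^\ast$ remains genuinely lower-order is where careful bookkeeping is required. In the lower bound, the simplex constraint is absorbed cleanly only because the explicit margin $C > e^{2\alpha}$ leaves $M_1$ large enough to carry the slack without violating the envelope.
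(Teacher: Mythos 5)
Your proposal is correct and follows essentially the same route as the paper: the isometry $\boldsymbol{\theta}\mapsto(\sqrt{\theta_k})_k$ into $\ell^2$, truncation at a threshold $\sim\frac{2}{\alpha}\ln(1/\epsilon)$, and matching bounds in which the main term $\frac{2}{\alpha}\ln^2(1/\epsilon)$ is reduced by $\frac{1}{\alpha}\ln^2(1/\epsilon)$ coming from the geometric decay of the envelope, with the early coordinates absorbing the normalisation slack thanks to $C>e^{2\alpha}$ exactly as the paper does via its condition $\sum_k f(k)\geq 2$. The only difference is technical: you count explicit coordinate grids (so the $\tfrac{N}{2}\ln N$ correction appears through the $\eta=\epsilon/\sqrt{2N}$ spacing), whereas the paper bounds packing and covering numbers by volume ratios of boxes and Euclidean balls, the same correction entering as $-\ln\vol(B_{\R^N}(0,1))\sim\frac{N}{2}\ln N$ via Stirling; both variants yield the same leading constant.
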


\begin{proof}[Proof of Theorem~\ref{thm:ExpClassRedundancy}]
 Just apply Theorem~\ref{thm:HausslerThm5}, with $h(x) = \frac{1}{\alpha} \ln^2 (x)$, to get the result.
\end{proof}

\begin{proof}[Proof of Proposition~\ref{EnvClassLowerEntropy}] 
 The outlines of the proofs of the next lemmas can be found in Appendix~\ref{sec:EnvClassesMetrEnt}. The corollaries are simple applications and the proofs are skipped.
 
 We start with general considerations. Let $\Lambda_f$ be the envelope class defined by the integrable envelope function $f$. Let $\Theta_f$ be the corresponding parameter set
 \begin{equation*}
 \begin{split}
  \Theta_{f} &= \{ \boldsymbol{\theta}=(\theta_1,\theta_2,\ldots) \in [0,1]^{\N}: \\
   &\qquad \sum_{i\geq 1} \theta_i = 1\ \text{and } \forall i \geq 1,\ \theta_i \leq f(i) \}.
 \end{split}
\end{equation*}

 The function $\boldsymbol{\theta} \mapsto (\sqrt{\theta_1},\sqrt{\theta_2},\ldots)$ is an isometry between the metric space $(\Theta_f,d)$ and the subset $A_f \cap \{ \|x\| = 1 \}$ of $\ell^2$, equipped with the classical euclidean norm $\|\cdot\|$, where $A_f$ is defined by
 \begin{equation}
  A_f = \{ (x_k)_{k\in\N^*} \in \ell^2: \ \forall k\in\N^*, \ 0 \leq x_k \leq \sqrt{f(k)} \}.
 \end{equation}
 The metric entropy of $(\Theta_f,d)$ can be calculated in this space.
 
 Next we truncate some coordinates, to work in a finite dimensional space instead of $\ell^2$. Together with an adequate use of Lemma~\ref{lem:numbers}, this helps us to obtain upper and lower bounds for the metric entropy of $(\Theta_f,d)$. We start with the upper bound.
 
 \begin{lem} \label{lem:UpperEntropy}
  Let $\Lambda_f$ be the envelope class defined by the integrable envelope function $f$, and let $\epsilon$ be a positive number. Let $N_\epsilon$ denote the integer
  \begin{equation*}
   N_\epsilon = \inf \left\{n\geq 1: \ \sum_{k\geq n+1} f(k) \leq \frac{\epsilon^2}{16} \right\}.
  \end{equation*}
  For $U\in\R^N$ and $a>0$, let $B_{\R^{N}}(U, a)$ denote the ball in $\R^n$ with center $U$ and radius $a$. Then
  \[ {\mathcal H}_\epsilon(\Theta_f,d) \leq N_\epsilon \ln (1/\epsilon) + 3 N_\epsilon \ln 2 + A(N_\epsilon) + B(\epsilon), \]
  where 
  \[ A(N) = -\ln \vol \left( B_{\R^{N}}(0, 1) \right) = \ln \frac{\Gamma(\frac{N}{2}+1)}{\pi^{\frac{N}{2}}} \]
  and
  \[ B(\epsilon) = \sum_{k=1}^{N_\epsilon} \ln \left(\sqrt{f(k)} + \frac{\epsilon}{4}\right). \]
  Furthermore
  \[ A(N_\epsilon) \underset{\epsilon \rightarrow 0}{\sim} \frac{N_\epsilon}{2} \ln N_\epsilon. \]
 \end{lem}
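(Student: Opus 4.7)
The plan is to use the isometry $\boldsymbol{\theta}\mapsto(\sqrt{\theta_k})_k$ stated just before the lemma to transfer everything to $A_f\cap\{\|x\|=1\}\subset\ell^2$, then truncate the tail to reduce the covering problem to a box in $\R^{N_\epsilon}$, and finally bound the resulting finite-dimensional covering number by a standard volumetric packing argument. I do not expect any real conceptual difficulty; the work lies almost entirely in the bookkeeping of constants.

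First I would truncate: for any $x=(\sqrt{\theta_k})_k$ in $A_f\cap\{\|x\|=1\}$, write $x=x^{(1)}+x^{(2)}$ with $x^{(1)}$ supported on $\{1,\ldots,N_\epsilon\}$. The defining inequality of $N_\epsilon$ gives $\|x^{(2)}\|^2=\sum_{k>N_\epsilon}\theta_k\le\sum_{k>N_\epsilon}f(k)\le\epsilon^2/16$, so $\|x^{(2)}\|\le\epsilon/4$. Set $B=\prod_{k=1}^{N_\epsilon}[0,\sqrt{f(k)}]\subset\R^{N_\epsilon}$: every possible $x^{(1)}$ lies in $B$. If $\{y_i\}$ is a covering of $B$ by Euclidean balls of radius $\epsilon/4$, the cells $C_i=\{x\in A_f\cap\{\|x\|=1\}:x^{(1)}\in B_{\R^{N_\epsilon}}(y_i,\epsilon/4)\}$ cover $A_f\cap\{\|x\|=1\}$; for $x,x'\in C_i$ one has $\|x^{(1)}-x'^{(1)}\|\le\epsilon/2$ and $\|x^{(2)}-x'^{(2)}\|\le\epsilon/2$, hence $\|x-x'\|\le\epsilon/\sqrt{2}\le\epsilon$. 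Pulled back through the isometry and disambiguated into a partition, this shows $\mathcal{D}_\epsilon(\Theta_f,d)$ is bounded by the number of balls of radius $\epsilon/4$ needed to cover $B$.

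Next I would bound that number by a packing-and-volume argument. A maximal $(\epsilon/4)$-separated subset $\{y_1,\ldots,y_m\}\subset B$ is automatically such a covering (otherwise one could adjoin a point, contradicting maximality), and the balls $B_{\R^{N_\epsilon}}(y_i,\epsilon/8)$ are pairwise disjoint. They are all contained in the inflated box $\prod_{k=1}^{N_\epsilon}[-\epsilon/8,\sqrt{f(k)}+\epsilon/8]$, whose volume equals $\prod_{k=1}^{N_\epsilon}(\sqrt{f(k)}+\epsilon/4)$. Disjointness yields $m(\epsilon/8)^{N_\epsilon}\vol(B_{\R^{N_\epsilon}}(0,1))\le\prod_{k=1}^{N_\epsilon}(\sqrt{f(k)}+\epsilon/4)$, and taking logarithms, together with $N_\epsilon\ln(8/\epsilon)=N_\epsilon\ln(1/\epsilon)+3N_\epsilon\ln2$, yields exactly the announced inequality.

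Finally, for the asymptotic $A(N_\epsilon)\sim(N_\epsilon/2)\ln N_\epsilon$, I would apply Stirling's formula to get $\ln\Gamma(N/2+1)=(N/2)\ln(N/2)-N/2+O(\ln N)$, whence $A(N)=\ln\Gamma(N/2+1)-(N/2)\ln\pi=(N/2)\ln N+O(N)$. Integrability of $f$ forces $N_\epsilon\to\infty$ as $\epsilon\to 0$, and the equivalent follows. As anticipated, the only delicate point is to arrange the radii so that the truncation error $\epsilon/4$, the box covering radius $\epsilon/4$, and the packing radius $\epsilon/8$ fit together coherently to produce the $\sqrt{f(k)}+\epsilon/4$ term in $B(\epsilon)$ and the $N_\epsilon\ln(1/\epsilon)+3N_\epsilon\ln2$ term in the final bound.
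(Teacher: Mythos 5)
Your proposal is correct and follows essentially the same route as the paper: truncation of the coordinates beyond $N_\epsilon$ with tail error $\epsilon/4$, reduction to the box $\prod_{k\leq N_\epsilon}[0,\sqrt{f(k)}]$, a packing/volume argument with disjoint balls of radius $\epsilon/8$ inside the inflated box of volume $\prod_{k\leq N_\epsilon}(\sqrt{f(k)}+\epsilon/4)$, and Stirling's formula for $A(N_\epsilon)\sim\frac{N_\epsilon}{2}\ln N_\epsilon$. The only cosmetic difference is that you build the $\epsilon$-diameter partition directly from the cells of the box cover, whereas the paper passes through the chain of inequalities of Lemma~\ref{lem:numbers}; the quantitative content is identical.
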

 Note that
 \[ -N_\epsilon \ln (1/\epsilon) - 2 N_\epsilon \ln 2 \leq B(\epsilon) \leq \frac{\epsilon}{4} N_\epsilon. \]
 These bounds on $B(\epsilon)$ show that $B(\epsilon)$ tends to decrease the upper bound, while $A(N_\epsilon)$ contributes to its growth. If $\ln N_\epsilon$ behaves like $\ln (1/\epsilon)$ up to a constant factor, then the upper bound given in Lemma~\ref{lem:UpperEntropy} corresponds to a constant times $N_\epsilon \ln N_\epsilon$, and we are concerned with the point~\ref{thm5:upper} of Theorem~\ref{thm:HausslerThm5}.
 
 If we apply Lemma~\ref{lem:UpperEntropy} to the case of exponentially decreasing envelope classes, we obtain the upper bounds
 \begin{gather*}
  N_\epsilon \leq \frac{2}{\alpha} \ln (1/\epsilon) + \frac{1}{\alpha} \ln \frac{16 C}{1-e^{-\alpha}} \\
  B(\epsilon) \leq -\frac{\alpha}{4} N_\epsilon^2 + \left(\frac{\ln C}{2} + \frac{1}{\sqrt{1-e^{-\alpha}}}\right) N_\epsilon.
 \end{gather*}
 This leads to
 \begin{corol} \label{corol:EnvClassUpperEntropy}
  Let $C$ and $\alpha$ be positive numbers satisfying $C > e^{2\alpha}$. The metric entropy of the parameter set $\Theta_{C,\alpha}$ 
  defined by (\ref{eq:ThetaCalpha}) 
  verifies
  \[ {\mathcal H}_\epsilon(\Theta_{C,\alpha},d) \leq (1+o(1)) \frac{1}{\alpha} \ln^2 (1/\epsilon), \]
  where $o(1)$ is a function $g(\epsilon)$ such that $g(\epsilon)\rightarrow 0$ as $\epsilon\rightarrow 0$.
 \end{corol}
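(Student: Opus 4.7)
The plan is to apply Lemma~\ref{lem:UpperEntropy} with the envelope $f(k)=Ce^{-\alpha k}$ and to track the dominant behaviour of each of its four terms as $\epsilon\to 0$. The two preparatory estimates displayed just before the corollary (the bounds on $N_\epsilon$ and $B(\epsilon)$) are the only nontrivial inputs, and both are elementary consequences of the geometric tail
\[
\sum_{k\geq n+1}Ce^{-\alpha k}=\frac{Ce^{-\alpha(n+1)}}{1-e^{-\alpha}}.
\]

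First I would derive $N_\epsilon = \frac{2}{\alpha}\ln(1/\epsilon)\,(1+o(1))$ by solving $\frac{Ce^{-\alpha(n+1)}}{1-e^{-\alpha}}\leq \epsilon^2/16$ for $n$; this directly gives the displayed bound on $N_\epsilon$. Then I would obtain the bound on $B(\epsilon)$ by splitting the sum at the index $k^\star\approx \frac{2}{\alpha}\ln(4\sqrt{C}/\epsilon)$ where $\sqrt{Ce^{-\alpha k}}$ crosses $\epsilon/4$: for $k\leq k^\star$ the inequality $\ln(\sqrt{f(k)}+\epsilon/4)\leq \ln 2 + \frac{1}{2}\ln C - \frac{\alpha}{2}k$ produces the quadratic term $-\frac{\alpha}{4}N_\epsilon^2$ (up to an $O(N_\epsilon)$ correction) after summation, while on the upper range the trivial bound $\ln(\sqrt{f(k)}+\epsilon/4)\leq \ln(\epsilon/2)$ contributes only an $O(N_\epsilon)$ remainder.

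The rest is bookkeeping. In Lemma~\ref{lem:UpperEntropy}'s upper bound $N_\epsilon\ln(1/\epsilon) + 3N_\epsilon\ln 2 + A(N_\epsilon) + B(\epsilon)$, the first term contributes $\frac{2}{\alpha}\ln^2(1/\epsilon)\,(1+o(1))$, the negative quadratic piece of $B(\epsilon)$ contributes $-\frac{\alpha}{4}\cdot\frac{4}{\alpha^2}\ln^2(1/\epsilon)\,(1+o(1)) = -\frac{1}{\alpha}\ln^2(1/\epsilon)\,(1+o(1))$, and every remaining piece---namely $3N_\epsilon\ln 2$, $A(N_\epsilon)\sim\tfrac{N_\epsilon}{2}\ln N_\epsilon=O\bigl(\ln(1/\epsilon)\ln\ln(1/\epsilon)\bigr)$, and the linear remainders in $B(\epsilon)$---is $o(\ln^2(1/\epsilon))$. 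Summing yields $\bigl(\tfrac{2}{\alpha}-\tfrac{1}{\alpha}\bigr)\ln^2(1/\epsilon)\,(1+o(1))=\tfrac{1}{\alpha}\ln^2(1/\epsilon)\,(1+o(1))$, as claimed.

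There is no conceptual obstacle here, since the substance has already been absorbed into Lemma~\ref{lem:UpperEntropy}; the only point requiring attention is the exact cancellation $\frac{2}{\alpha}-\frac{1}{\alpha}=\frac{1}{\alpha}$ between the dominant positive contribution of $N_\epsilon\ln(1/\epsilon)$ and the dominant negative contribution of $B(\epsilon)$. This is precisely why it matters that the $B(\epsilon)$ bound carry the sharp coefficient $-\frac{\alpha}{4}$ in front of $N_\epsilon^2$, and also why the hypothesis $C>e^{2\alpha}$ plays no explicit role in the leading asymptotics: it only guarantees that $\Theta_{C,\alpha}$ is rich enough so that the lemma is non-vacuous.
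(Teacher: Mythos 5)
Your proposal is correct and takes essentially the same route as the paper, which likewise obtains the corollary as a ``simple application'' of Lemma~\ref{lem:UpperEntropy} via exactly the two displayed bounds on $N_\epsilon$ and $B(\epsilon)$, the leading terms $\frac{2}{\alpha}\ln^2(1/\epsilon)$ and $-\frac{\alpha}{4}N_\epsilon^2\sim-\frac{1}{\alpha}\ln^2(1/\epsilon)$ cancelling to $\frac{1}{\alpha}\ln^2(1/\epsilon)$ while $3N_\epsilon\ln 2$ and $A(N_\epsilon)\sim\frac{N_\epsilon}{2}\ln N_\epsilon$ are $o(\ln^2(1/\epsilon))$. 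Your derivations of the $N_\epsilon$ and $B(\epsilon)$ bounds, which the paper states without proof, are sound.
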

 
 Now we need to get a lower bound on the metric entropy. In this case too, we want to truncate some coordinates to bring ourselves to a smaller finite dimensional space. This time we truncate the first coordinates. Let us consider the number
 \begin{equation*}
  l_f = \min \{ l\geq 0: \sum_{k\geq l+1} f(k) \leq 1 \}.
 \end{equation*}
 
 \begin{lem} \label{lem:LowerEntropy}
  Let $\Lambda_f$ be the envelope class defined by an integrable envelope function $f$, which verifies
  \[ \sum_{k\geq 1} f(k) \geq 2. \]
  Let $\epsilon>0$ be a positive number, and let $m\geq 1$ be an integer. Then
  \begin{equation*} 
   {\mathcal H}_\epsilon(\Theta_f,d) \geq \frac{1}{2} \sum_{k=l_f+1}^{l_f+m} \ln f(k) + m \ln \left(\frac{1}{\epsilon}\right) + A(m),
  \end{equation*}
  where $A(m)$ is defined as in Lemma~\ref{lem:UpperEntropy}:
  \[ A(m) = -\ln \vol \left( B_{\R^{m}}(0, 1) \right) \underset{m \rightarrow \infty}{\sim} \frac{m}{2} \ln m. \]
 \end{lem}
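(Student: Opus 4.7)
The plan is to exploit the isometry $\boldsymbol{\theta}\mapsto(\sqrt{\theta_k})_k$ already introduced in the proof, which identifies $(\Theta_f,d)$ with $A_f\cap\{\|x\|=1\}\subset\ell^2$, and then to exhibit, via a volume argument in $\R^m$, a ``witness'' subset of the image that resists being covered by few small sets. The natural witness is the box over the coordinates $l_f+1,\dots,l_f+m$:
\[ S_m \;=\; \prod_{k=l_f+1}^{l_f+m}\bigl[0,\sqrt{f(k)}\bigr] \;\subset\; \R^m, \]
whose Lebesgue volume is $\prod_{k=l_f+1}^{l_f+m}\sqrt{f(k)}$; its logarithm will produce the term $\tfrac{1}{2}\sum_{k=l_f+1}^{l_f+m}\ln f(k)$ in the claimed bound.

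The crucial step is to show that the coordinate projection $\pi_m:\ell^2\to\R^m$ onto the indices $l_f+1,\dots,l_f+m$ maps $A_f\cap\{\|x\|=1\}$ onto a set containing all of $S_m$. Given $(x_{l_f+1},\dots,x_{l_f+m})\in S_m$, set $x_k=0$ for $k>l_f+m$. By the definition of $l_f$, $\sum_{k\geq l_f+1}f(k)\leq 1$, so the residual mass $r=1-\sum_{k=l_f+1}^{l_f+m}x_k^2$ lies in $[0,1]$. The hypothesis $\sum_k f(k)\geq 2$ combined with $\sum_{k\geq l_f+1}f(k)\leq 1$ forces $\sum_{k=1}^{l_f}f(k)\geq 1\geq r$, so one can distribute $r$ among the first $l_f$ coordinates subject to the constraints $x_k^2\leq f(k)$ (greedy filling suffices). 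This produces a preimage in $A_f\cap\{\|x\|=1\}$, proving $S_m\subset\pi_m\bigl(A_f\cap\{\|x\|=1\}\bigr)$.

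To conclude, let $N=\mathcal{D}_\epsilon(\Theta_f,d)$ and fix a partition of $\Theta_f$ into $N$ pieces of $d$-diameter at most $\epsilon$. Pushed through the isometry and then through $\pi_m$, both $1$-Lipschitz, it yields a covering of $S_m$ by $N$ subsets of $\R^m$ of Euclidean diameter at most $\epsilon$, hence each contained in a ball of radius $\epsilon$. Therefore
\[ \vol(S_m) \;\leq\; N\cdot\epsilon^m\,\vol\bigl(B_{\R^m}(0,1)\bigr), \]
and taking logarithms gives exactly the stated inequality. The asymptotic $A(m)\sim(m/2)\ln m$ is immediate from Stirling's formula applied to $\Gamma(m/2+1)$. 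The only genuine obstacle is the surjectivity step of the second paragraph: one must check that the leading $l_f$ coordinates collectively carry enough slack to absorb the leftover mass, and this is precisely where the assumption $\sum_k f(k)\geq 2$ and the definition of $l_f$ are used in tandem.
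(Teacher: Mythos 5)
Your proof is correct and follows essentially the same route as the paper: project $A_f\cap\{\|x\|=1\}$ onto the coordinates $l_f+1,\dots,l_f+m$, compare the box $\prod_{k=l_f+1}^{l_f+m}[0,\sqrt{f(k)}]$ against $\mathcal{D}_\epsilon(\Theta_f,d)$ images of diameter at most $\epsilon$, and conclude by a volume ratio plus Stirling. Your explicit verification that the projection is onto the whole box (using $\sum_k f(k)\geq 2$ together with $\sum_{k\geq l_f+1}f(k)\leq 1$ to absorb the residual mass in the first $l_f$ coordinates) is exactly the step the paper's outline leaves implicit, so nothing is missing.
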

 Note that exponentially decreasing envelopes verify the condition  $\sum_{k\geq 1} f(k) \geq 2$. Indeed the envelope of exponentially decreasing envelope classes is
 \[ f(k) = \min(1,C e^{-\alpha k}), \]
 and the condition $C > e^{2\alpha}$ entails that $f(1)= f(2) = 1$.
 
 From Lemma~\ref{lem:LowerEntropy} we can infer the following corollary, with the choice $m=\left\lfloor \frac{2}{\alpha} \ln\left(\frac{1}{\varepsilon}\right) \right\rfloor$.
 
 \begin{corol} \label{corol:EnvClassLowerEntropy}
  Let $C$ and $\alpha$ be positive numbers satisfying $C > e^{2 \alpha}$. The metric entropy of the parameters set $\Theta_{C,\alpha}$ verifies
  \[ {\mathcal H}_\epsilon(\Theta_{C,\alpha},d) \geq (1+o(1)) \frac{1}{\alpha} \ln^2 (1/\epsilon), \]
  where $o(1)$ is a function $g(\epsilon)$ such that $g(\epsilon)\rightarrow 0$ as $\epsilon\rightarrow 0$.
 \end{corol}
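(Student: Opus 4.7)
The plan is to apply Lemma~\ref{lem:LowerEntropy} directly to the envelope $f(k) = \min(1, Ce^{-\alpha k})$, with the explicit choice $m = \lfloor \frac{2}{\alpha}\ln(1/\epsilon)\rfloor$ suggested just above the statement. The hypothesis $\sum_{k\geq 1}f(k)\geq 2$ is already verified in the remark preceding the corollary, since $C > e^{2\alpha}$ forces $f(1)=f(2)=1$. What remains is essentially a calculus exercise, once I pin down the truncation index $l_f$.

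My first step is to check that $l_f$ is an $\epsilon$-independent constant depending only on $C$ and $\alpha$. For any $l$ with $Ce^{-\alpha l}\leq 1$, the tail reduces to a geometric series,
\[ \sum_{k\geq l+1} f(k) = \frac{Ce^{-\alpha(l+1)}}{1-e^{-\alpha}}, \]
which is $\leq 1$ as soon as $l+1 \geq \frac{1}{\alpha}\ln\frac{C}{1-e^{-\alpha}}$. Hence $l_f$ is bounded by an explicit constant depending only on $C$ and $\alpha$, and in particular $\ln f(k) = \ln C - \alpha k$ for every $k \geq l_f + 1$.

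Next I expand each term of the bound of Lemma~\ref{lem:LowerEntropy} asymptotically in $\epsilon$. The arithmetic sum gives
\[ \frac{1}{2}\sum_{k=l_f+1}^{l_f+m} \ln f(k) = \frac{m\ln C}{2} - \frac{\alpha m(2l_f+m+1)}{4} = -\frac{\alpha m^2}{4} + O(m), \]
so with $m\sim \frac{2}{\alpha}\ln(1/\epsilon)$ it contributes $-\frac{1}{\alpha}\ln^2(1/\epsilon)(1+o(1))$. The linear term $m\ln(1/\epsilon)$ contributes $\frac{2}{\alpha}\ln^2(1/\epsilon)(1+o(1))$, and the volume term $A(m)\sim \frac{m}{2}\ln m$ is of order $\ln(1/\epsilon)\ln\ln(1/\epsilon)$, hence negligible. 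Adding the three pieces yields exactly $\frac{1}{\alpha}\ln^2(1/\epsilon)(1+o(1))$.

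Beyond this bookkeeping I see no real obstacle. The only subtle point is that the clean formula $\ln f(k)=\ln C-\alpha k$ must hold over the whole range of summation, i.e.\ for $k>l_f$; the hypothesis $C>e^{2\alpha}$ together with the uniform boundedness of $l_f$ guarantees this, so the ``flat'' coordinates where $f(k)=1$ never enter the sum and cannot spoil the leading $\ln^2(1/\epsilon)$ term.
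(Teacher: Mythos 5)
Your proposal is correct and takes essentially the same route as the paper, which likewise obtains the corollary by applying Lemma~\ref{lem:LowerEntropy} to $f(k)=\min(1,Ce^{-\alpha k})$ with $m=\left\lfloor \frac{2}{\alpha}\ln(1/\epsilon)\right\rfloor$; your expansion of the three terms ($-\frac{\alpha m^2}{4}+O(m)$, $m\ln(1/\epsilon)$, and $A(m)=o(\ln^2(1/\epsilon))$) is exactly the intended computation. The only imprecision is the justification that $f(k)=Ce^{-\alpha k}$ for all $k\geq l_f+1$: this follows directly from the definition of $l_f$ (if $Ce^{-\alpha(l_f+1)}\geq 1$ then $\sum_{k\geq l_f+1}f(k)\geq 1+\sum_{k\geq l_f+2}f(k)>1$, contradicting $l_f$'s defining property), rather than from the upper bound on $l_f$, but this is a one-line repair that does not affect the argument.
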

 Note that the bound is the same as in Corollary~\ref{corol:EnvClassLowerEntropy}. Therefore this concludes the proof of Proposition~\ref{EnvClassLowerEntropy}.
\end{proof}

\subsection{What about other envelope classes?} \label{sec:PolyEnvClasses}
In \cite{eec_IEEE:BouGarGas06} the redundancy of another type of envelope classes is also studied. The \emph{power-law envelope class} $\Lambda_{C \cdot^{-\alpha}}$ is defined, for $C>1$ and $\alpha>1$, by the envelope function $f_{\alpha,C}(x) = \min(1,\frac{C}{x^\alpha})$. The bounds obtained in \cite[Theorem~6]{eec_IEEE:BouGarGas06} are
\begin{equation} \label{eq:BGGClassesPoly}
 \begin{split}
  &A(\alpha) n^{1/\alpha} \log \lfloor C \zeta(\alpha) \rfloor \\
   &\qquad\leq \R_n(\Lambda_{C\cdot^{-\alpha}}) \\
   &\qquad\leq \left( \frac{2 C n}{\alpha-1} \right)^{1/\alpha} (\log n)^{1-1/\alpha} + O(1),
 \end{split}
\end{equation}
where
\[ A(\alpha) = \frac{1}{\alpha} \int_{1}^{\infty} \frac{1-e^{-1/(\zeta(\alpha) u)}}{u^{1-1/\alpha}} \ud u, \]
and $\zeta$ denotes the classical function $\zeta(\alpha) = \sum_{k\geq 1} \frac{1}{k^\alpha}$, for $\alpha>1$.

If one adapts the calculus made earlier to the power-law envelope classes, one can get the following upper and lower bounds: \\
There are two (calculable) constants $K_1,K_2>0$ such that, for all $\epsilon>0$,
\[ K_1 \left(\frac{1}{\epsilon}\right)^{\frac{2}{\alpha-1}} \leq {\mathcal H}_\epsilon \leq K_2 (1+o(1)) \left(\frac{1}{\epsilon}\right)^{\frac{2}{\alpha-1}} \ln \left(\frac{1}{\epsilon}\right). \]
Unfortunately this formula leaves a gap between the lower bound and the upper bound. The application of Theorem~\ref{thm:HausslerThm5} makes the gap worse. Indeed the polynomial part $\left(\frac{1}{\epsilon}\right)^{\frac{2}{\alpha-1}}$ of the metric entropy causes an additional gap of $\log^{1/\alpha} n$. In practice the bounds are the following: \\
There are two (unknown) constants $C,c>0$ such that, for all $n\geq 1$,
\begin{equation} \label{eq:PowerLawClasses}
c (1+o(1)) n^{1/\alpha} \leq R_n(\Lambda_{C \cdot^{-\alpha}}) \leq C (1+o(1)) n^{1/\alpha} \log n.
\end{equation}

These inequalities improve in no way the result of \cite{eec_IEEE:BouGarGas06}. May a better calculus of the metric entropy improve either their lower bound or their upper bound? Anyway the metric entropy of power-law envelope classes is ``too big'' to efficiently apply Theorem~\ref{thm:HausslerThm5}: it does not leave the hope for an equivalence, as for exponentially decreasing envelope classes. To summarize, the strategy based on the metric entropy and Theorem~\ref{thm:HausslerThm5} turns out efficient for ``small'' classes of sources.

\section{AutoCensuring Code} \label{sec:code}

This section presents a new algorithm called AutoCensuring Code (\textup{\texttt{ACcode}}). It is in fact a modification of the Censuring Code proposed by Boucheron, Garivier and Gassiat in \cite{eec_IEEE:BouGarGas06}. We keep the idea that big symbols are very few, and must be encoded differently, with an Elias code. Smaller symbols are encoded by arithmetic coding based on Krichevsky-Trofimov mixtures, which are known to be effective for finite alphabets. Our innovation is a data-driven cutoff $M_i = \sup_{1 \leq k \leq i} X_k$ used to encode $X_{i+1}$: with this choice we do not need to know the exact parameters of the exponentially decreasing envelope.

\textup{\texttt{ACcode}} is a prefix code on the set of all finite length messages, and it works on line. Its maximum redundancy on an exponentially decreasing envelope class $\Lambda_{Ce^{-\alpha\cdot}}$ is equivalent to the minimax redundancy of this class of sources. Furthermore \textup{\texttt{ACcode}} is adaptive, as the same algorithm verifies this property with all exponentially decreasing envelope classes. This is formulated in the following theorem, proved in Appendix~\ref{sec:proof-ACcode}. \\
 Let $\textup{\texttt{ACcode}}(x_{1:n})$ denote the binary string produced by \textup{\texttt{ACcode}} when it encodes the message $x_{1:n}$, and let $l(\cdot)$ denote the length of a string.

\begin{thm} \label{thm:ACcodeRedundancy}
 For any positive numbers $C$ and $\alpha$ satisfying $C > e^{2 \alpha}$,
 \[ \sup_{\boldsymbol{P} \in \Lambda_{C e^{-\alpha\cdot}}} \E_{P^n}[l(\textup{\texttt{ACcode}}(X_{1:n})) - H(P^n)] \underset{n \rightarrow \infty}{\sim} R_n(C, \alpha). \]
\end{thm}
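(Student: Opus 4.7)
The plan is to show $\sup_{P\in\Lambda_{Ce^{-\alpha\cdot}}}\E_{P^n}[l(\mathtt{ACcode}(X_{1:n}))-H(P^n)]\sim R_n(C,\alpha)$ by matching upper and lower bounds against the asymptote $\frac{\log^2 n}{4\alpha\log e}$ supplied by Theorem~\ref{thm:ExpClassRedundancy}. The lower bound is essentially free: any prefix code induces (via Kraft) an associated coding probability $Q^n_{\mathrm{AC}}$, so
\[
\sup_{P} \E_{P^n}[l(\mathtt{ACcode})-H(P^n)] \;\ge\; \inf_{Q^n}\sup_{P} D(P^n\|Q^n) - O(1) \;=\; R_n(C,\alpha) - O(1),
\]
where the $O(1)$ absorbs the integer-length slack of arithmetic coding. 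All the work is on the matching upper bound.

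For the upper bound I would begin by decomposing the codelength. At step $i+1$ the one-step \texttt{ACcode} probability has two branches: (a) if $X_{i+1}\le M_i$ it is a Krichevsky-Trofimov predictor on $\{1,\ldots,M_i\}$ times the probability of no escape; (b) if $X_{i+1}>M_i$ it is the probability of escape times the Elias codeword probability of $X_{i+1}$. Writing $l(\mathtt{ACcode}(x_{1:n}))\le -\log Q^n_{\mathrm{AC}}(x_{1:n})+2$, the expected codelength splits into a \emph{body} contribution $L_{\mathrm{body}}$ from branch (a) and an \emph{escape} contribution $L_{\mathrm{esc}}$ from branch (b), which I bound separately.

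I would handle $L_{\mathrm{esc}}$ first and show it is negligible. The number of escapes is at most $M_n$, while the envelope condition $P(k)\le Ce^{-\alpha k}$ gives $\Pr(M_n>m)\le nCe^{-\alpha m}/(1-e^{-\alpha})$, hence $\E[M_n]\le \ln n/\alpha + O(1)$ uniformly on the class. Each new-maximum symbol incurs an Elias cost of $O(\log M_n)$, so $\E[L_{\mathrm{esc}}]=O((\ln n/\alpha)\log\log n)=o(\log^2 n)$. The main term is then $L_{\mathrm{body}}$. Using the chain rule $D(P^n\|Q^n_{\mathrm{AC}})=\sum_{i=0}^{n-1}\E[D(P\|Q_{\mathrm{AC}}(\cdot\mid X_{1:i}))]$ together with the classical per-step KT computation (second-order Taylor expansion of $\log$ around the P\'olya-urn mean $n_j(i)+1/2\approx iP_j+1/2$), the Dirichlet-$(1/2)$ predictor on a size-$k$ alphabet yields one-step redundancy $\sim \tfrac{k-1}{2i\ln 2}$ bits. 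Applied with $k=M_i+1$ (absorbing the escape flag as an extra Dirichlet cell) and summed,
\[
\sum_{i=1}^{n-1}\frac{\E[M_i]}{2i\ln 2}\sim \frac{1}{2\alpha\ln 2}\sum_{i=1}^{n-1}\frac{\ln i}{i}\sim \frac{(\ln n)^2}{4\alpha\ln 2}=\frac{\log^2 n}{4\alpha\log e},
\]
which matches $R_n(C,\alpha)$ to leading order.

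The hard part will be making the one-step KT bound rigorous and uniform in $P$. I would need to (i) control the second-order error in the log-expansion when $j$ is near $M_i$ and cell counts are small, (ii) treat the mixed KT/Elias step as a single extended Dirichlet predictor on $\{1,\ldots,M_i,\mathrm{esc}\}$ so that the chain-rule identity applies cleanly, and most delicately (iii) establish concentration of $M_i$ sharp enough to preserve the \emph{constant} $1/(4\alpha\log e)$, not merely the rate $\log^2 n$. For (iii) I would write $\E[M_i]=\sum_{m\ge 0}\Pr(M_i>m)$ and split the sum at $m=\ln i/\alpha$: the envelope forces the tail to contribute $O(1)$ while below the threshold each term is $1-o(1)$, giving $\E[M_i]=\ln i/\alpha + O(1)$ uniformly over the class. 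An Abel summation on $\sum_i \E[M_i]/(2i\ln 2)$ then completes the identification of the leading constant.
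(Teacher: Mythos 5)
Your overall architecture is sound: the lower bound is indeed free (any prefix code induces a coding sub-probability via Kraft, so the maximal expected redundancy of \texttt{ACcode} is at least $R_n(C,\alpha)$, and Theorem~\ref{thm:ExpClassRedundancy} supplies the asymptote), the Elias part is $o(\log^2 n)$ by the same moment estimates on $M_n$ the paper uses, and your heuristic $\sum_i \E[M_i]/(2i\ln 2)\approx \log^2 n/(4\alpha\log e)$ is exactly the right intuition for why the data-driven cutoff halves the naive $\E[M_n]\log n/2$ cost. But the central step of your upper bound has a genuine gap. You invoke a ``classical per-step KT computation'' giving one-step redundancy $\sim (k-1)/(2i\ln 2)$ and apply it with $k=M_i+1$. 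No such result is available in the generality you need: the classical statements are either cumulative worst-case regret bounds or fixed-alphabet, fixed-interior-$P$ asymptotics, whereas here the alphabet size $M_i$ grows, is random and \emph{correlated with the counts} $n_i^j$ (conditioning on $M_i=m$ biases the cell counts), the cell probabilities range down to the critical scale $1/i$, and you need the bound with the exact constant, uniformly over $\Lambda_{Ce^{-\alpha\cdot}}$. You yourself flag (i)--(iii) as open; they are not routine refinements but the substance of the proof. There is also a bookkeeping slip: the escape branch costs $-\log Q_{i+1}(0\mid x_{1:i})=\log(2i+M_i+1)$ \emph{in addition to} the Elias bits, and summed over the $\Theta(\ln n)$ escape times this arithmetic cost alone is of order $\log^2 n$ (roughly $\log^2 n/(2\alpha\log e)$ for near-envelope sources); your claim $\E[L_{\mathrm{esc}}]=O((\ln n/\alpha)\log\log n)$ is only legitimate if you adopt the lumped bookkeeping throughout, i.e.\ the escape cell with true mass $\bar p(M_i)\approx 1/i$ is kept inside the per-step divergence (where it nearly breaks even against the predicted mass $\approx 1/(2i)$) and only the Elias bits, compared against a discarded conditional entropy, are charged separately. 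Finally, $\E[M_i]=\ln i/\alpha+O(1)$ does not hold uniformly over the class (it contains sources with bounded support); only the upper bound does, which fortunately is all you use.

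For comparison, the paper circumvents every one of these difficulties by never expanding a one-step divergence. It uses the exact pointwise identity $\log Q^n(\widetilde X_{1:n})=\log KT_{M_n+1}(X_{1:n})+\sum_{i=0}^{n-1}\log\frac{2i+1+M_n}{2i+1+M_i}$, bounds the first term through Catoni's nonasymptotic worst-case regret bound $\frac{k-1}{2}\log n+\log k$ for $KT_k$ (valid for every sequence, hence immune to the dependence between $M_n$ and the counts), which yields the too-large $\log^2 n/(2\alpha\log e)$, and then recovers the factor $2$ by lower-bounding the explicit correction sum via concavity of $\log$ and grouping the indices into blocks of length $M_n$, using only the moment bounds of Lemma~\ref{lem:ExpectedMn}. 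If you want to carry out your route, the analogue of that correction-term analysis (or a genuinely uniform one-step KT redundancy bound with the escape cell included) is precisely what you still have to supply.
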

The difference between the redundancy of \textup{\texttt{ACcode}} and the minimax redundancy is not necessarily bounded: there may exist codes whose redundancy is smaller than the redundancy of \textup{\texttt{ACcode}}, but with a benefit negligible with respect to $\log^2 n$.

Additionally, Theorem~\ref{thm:ACcodeRedundancy} enables us to retrieve the upper bound of the minimax redundancy obtained in the section~\ref{sec:MinimaxRedundancy}.

Let us now define \textup{\texttt{ACcode}}. Let $n\geq 1$ be some positive integer, and let $x_{1:n}=x_1 x_2\ldots x_n$ be a string from $\N_{*}^n$ to be encoded. We define the sequence of maxima
\[ m_0 = 0 \text{ and } m_i = \sup_{1 \leq k \leq i} x_k, \text{ for all } 1 \leq i \leq n. \]

The sequence $(m_i)_{1\leq i \leq n}$ is non-decreasing, piecewise constant. For $1\leq i \leq n$, let $n_{i}^0=\sum_{j=1}^{i} \1_{m_j>m_{j-1}}$ be the number of plateaus between $1$ and $i$. For $1\leq k \leq n^0_n$, let $\widetilde{m}_k$ be the $k^\text{th}$ new maximum: 
\begin{equation} \label{eq:mmtilde}
 m_i = \widetilde{m}_{n_i^0}.
\end{equation}

We define also $\widetilde{m}_0=0$. Let string $\m$ be the sequence $(\widetilde{m}_1 - \widetilde{m}_{0} +1), \ldots, (\widetilde{m}_{n_n^0} - \widetilde{m}_{n_n^0-1} +1), 1$. $\m$ is encoded into a binary string $\textup{\texttt{C2}}$ by applying Elias penultimate code (see \cite{eec_IEEE:Elias75}) to each number in $\m$. It is a prefix code which uses $l_E(x)$ bits to encode a positive integer $x$, with
\begin{equation} \label{eq:EliasLength}
 \begin{split}
   l_E(1) &= 1, \\
   l_E(x) &= 1 + \lfloor \log x \rfloor + 2 \left\lfloor \log \lfloor \log x \rfloor + 1 \right\rfloor \quad \text{if } x\geq 2.
 \end{split}
\end{equation}

Meanwhile the sequence of censored symbols is encoded using side information from $\m$. Consider the censored sequence $\widetilde{x}_{1:n}=\widetilde{x}_1 \widetilde{x}_2\ldots \widetilde{x}_n$ defined by
\[ \widetilde{x}_{i} = x_i \1_{x_i\leq m_{i-1}} = \left\{ \begin{array}{ll}
                                                 x_i &\text{if } x_i\leq m_{i-1}, \\
                                                 0 &\text{otherwise.}
                                                \end{array} \right. \]
All symbols greater than $m_{i-1}$ are encoded together: they are replaced by the extra symbol $0$, and this extra symbol is encoded instead. $0$ has a special use in our setting: it makes the decoder to know when $m_i$ changes, and that the new value has to be read in $\textup{\texttt{C2}}$. We add at the end of $\widetilde{x}_{1:n}$ an additional $0$, which acts as a termination signal together with the last $1$ in $\m$. This makes our code to be prefix on the set of all finite length messages (whatever $n$).

Therefore we produce the binary string $\textup{\texttt{C1}}$ by arithmetic coding of $\widetilde{x}_{1:n}0$. The conditional coding probabilities are defined by
\begin{align*}
  Q_{i+1}(\widetilde{X}_{i+1}=j | X_{1:i} = x_{1:i}) &= \frac{n_i^j + \frac{1}{2}}{i + \frac{m_i + 1}{2}} \quad\text{if}\quad 1 \leq j \leq m_i, \\
  Q_{i+1}(\widetilde{X}_{i+1} = 0 | X_{1:i} = x_{1:i}) &= \frac{1/2}{i + \frac{m_i + 1}{2}},
\end{align*}
where for $j\geq 1$ and $i\geq 0$, $n_i^j$ is the number of occurrences of symbol $j$ in $x_{1:i}$ (with convention $n_0^j = 0$ for all $j\geq 1$).

If $i\leq n-1$, the event $\{\widetilde{X}_{i+1} = 0\}$ is equal to $\{X_{i+1} > M_i\}$. If $x_{i+1} = j > m_i$, then $n_i^j=0$, and we still have
\[ Q_{i+1}(\widetilde{X}_{i+1} = 0 | X_{1:i} = x_{1:i}) = \frac{n_i^j + \frac{1}{2}}{i + \frac{m_i + 1}{2}}. \]

In the sequel we note the coding probability used to encode the entire string $\widetilde{x}_{1:n}0$ by
\[ Q^{n+1}(\widetilde{x}_{1:n}0) = Q_{n+1}(0 | x_{1:n}) \prod_{i=0}^{n-1} Q_{i+1}(\widetilde{x}_{i+1} | x_{1:i}). \]

A remark we can do is that the symbol $0$ is always considered as new: when $x_{i+1}>m_i$, we encode $0$ but we increment the counter $n^{x_{i+1}}_{i}$. (This choice has been made to simplify the calculation of the redundancy of \textup{\texttt{ACcode}}, but we suspect that changing this behavior could improve the performances.)

Now we have defined $\textup{\texttt{C1}}$ and $\textup{\texttt{C2}}$, we have to describe how they are transmitted. To keep our code on line, we overlap these two strings in the following way.

Arithmetic code needs a certain amount of bits, say $l_{i}$, to send the first $i$ symbol of $\widetilde{x}_{1:n}$. Unfortunately, $l_{i}$ depends on whether $i=n+1$ or not. In previous case $l_{n+1} = \lceil -\log Q^{n+1}(\widetilde{x}_{1:n}0) \rceil +1$, and in later one $l_{i}$ depends on the following symbols and has to be computed. \\
\textup{\texttt{ACcode}} begins with $\textup{\texttt{C2}}$, by the transmission of the Elias code of $\widetilde{m}_1+1$. Then the transmission of $\textup{\texttt{C1}}$ is initiated. Suppose that $\widetilde{x}_{i} = 0$ and $n^0_{i}=k$. As soon as $l_{i}$ bits of $\textup{\texttt{C1}}$ have been sent, 
the \textup{\texttt{ACcode}} algorithm sends the Elias code of $\widetilde{m}_k-\widetilde{m}_{k-1}+1$. Then $\textup{\texttt{C1}}$ is transmitted again, from the next bit.

To decode the $i^{\text{th}}$ symbol in $\textup{\texttt{C1}}$, the knowledge of the current maximum $m_{i-1}$ is needed; it is obtained from the beginning of the string $\textup{\texttt{C2}}$. The decoder also needs the counters $(n_{i-1}^j)_{j\geq 1}$, which can be computed from the first $i-1$ decoded symbols. As soon as $l_i$ bits of $\textup{\texttt{C1}}$ have been received, $\widetilde{x}_i$ can be decoded. When the decoder meets a $0$ at the $i^{\text{th}}$ position, he knows that the next bits are the Elias code of the next symbol in $\m$, and deduces $m_{i}$ via (\ref{eq:mmtilde}). Since the Elias code is prefix, the decoder knows when he receives $\textup{\texttt{C1}}$ again. Then the $(i+1)^{\text{th}}$ symbol can be processed.

Fig.~\ref{fig:ExACcode} shows an illustration of the transmission process. In this example, the initial message is $x_{1:4} = 5, 3, 2, 7$. Then the message encoded in $\textup{\texttt{C1}}$ is $\widetilde{x}_{1:4}0 = 0, 3, 2, 0, 0$. $13$ bits are needed to transmit the second $0$, and $15$ bits for the last one. In $\textup{\texttt{C2}}$ we transmit $\m = 6, 3, 1$.

\begin{figure*}[!t]
 \centering
  \begin{tabular}{c@{}c@{}c@{}c@{}c}
   $\underbrace{01110}$ & $\underbrace{0110011001101}$ & $\underbrace{0101}$ 
     & $\underbrace{00}$ & $\underbrace{1}$ \\
   $\downarrow$ & 
     \begin{tabular}{c}
      beginning \\
      of $\textup{\texttt{C1}}$
     \end{tabular} &
     $\downarrow$ &
     \begin{tabular}{c}
      last bits \\
      of $\textup{\texttt{C1}}$
     \end{tabular} &
     $\downarrow$ \\
   \begin{tabular}{c}
     Elias code \\
     of $\widetilde{m}_1+1$
   \end{tabular} &
     & \begin{tabular}{c}
        Elias code \\ 
        of $\widetilde{m}_2-\widetilde{m}_1+1$
       \end{tabular} & 
     & \begin{tabular}{c}
        Elias code \\ 
        of $1$
       \end{tabular}
  \end{tabular}
 \caption{Example of \textup{\texttt{ACcode}}}
 \label{fig:ExACcode}
\end{figure*}

In the previous example, exact calculations have been performed, but this is not sensible for a practical implementation of arithmetic coding. Some rule is needed to set the precision in calculus, and it must be used by both coder and decoder. To avoid a too big extra redundancy caused by approximations, precision can grow as $n$ grows. For instance, calculations can be made in memory with a further precision of $2\,\lceil \log i \rceil$ bits, in addition to the $\lceil -\log Q^{i}(\widetilde{x}_{1:i}) \rceil +1$ bits needed to encode $x_{1:i}$; this insures that the extra redundancy is bounded. 



\appendices

\section{Metric entropy of exponentially decreasing envelope classes} \label{sec:EnvClassesMetrEnt}

We give here the outlines of the lemmas we stated in subsection~\ref{sec:EnvClassRedundancy}.

\begin{proof}[Proof of Lemma~\ref{lem:UpperEntropy}]
 $N_\epsilon$ denotes the threshold from which we want to truncate the coordinates. If $y=(y_n)_{n\geq 1}$ is an element of $A_f$, its truncated version is $\widetilde{y} = (y_n \1_{n\leq N_\epsilon})_{n\geq 1}$. One can check that
 \begin{equation*}
   \|y - \widetilde{y}\| \leq \frac{\epsilon}{4}.
 \end{equation*}
 
 Suppose now that $S$ is an $\epsilon/4$-cover of $\{y\in A_f: \ \forall n\geq N_\epsilon, y_n=0\}$. Let $z$ denote an element of $A_f$. Then it exists some $y\in S$ such that $\|\widetilde{z} - y\| \leq \epsilon/4$. Thus $\|z - y\| \leq \epsilon/2$, and $S$ is an $\epsilon/2$-cover of $A_f$. This leads to
 \begin{equation*}
  \begin{split}
   {\mathcal D}_\epsilon(\Theta_f,d) &\leq {\mathcal M}_{\epsilon/4}\left( \prod_{1\leq k\leq N_\epsilon} \left[0,\sqrt{f(k)}\right], \|\cdot\|_{\R^{N_\epsilon}} \right) \\
    &\leq \frac{\vol \left( \prod_{1\leq k\leq N_\epsilon} \left[ -\frac{\epsilon}{8}, \sqrt{f(k)} + \frac{\epsilon}{8} \right] \right)}{\vol \left( B_{\R^{N_\epsilon}}(0, \frac{\epsilon}{8}) \right)} \\
  \end{split}
 \end{equation*}
 
 A first consequence of that calculus is that ${\mathcal D}_\epsilon(\Theta_f,d)$ is finite for all $\epsilon>0$. The first assertion of Lemma~\ref{lem:UpperEntropy} is then obtained by applying the logarithm function.
 
 The rest of Lemma~\ref{lem:UpperEntropy} follows from the Feller bounds, in their version proposed by \cite[ch. XII]{eec_IEEE:whittaker_watson}:
 \begin{equation} \label{eq:stirling}
  \Gamma(x) = \sqrt{2\pi}\, x^{x-1/2} e^{-x} e^{\frac{\beta}{12 x}}, \quad\text{with } \beta \in [0,1].
 \end{equation}
\end{proof}

\begin{proof}[Proof of Lemma~\ref{lem:LowerEntropy}]
 Let $m\geq 1$ be an integer. We project the set $A_f \cap \{ \|x\| = 1 \}$ over the $m$-dimensional space 
 \[E_m = \{0\}^{l_f} \times \R^m \times \{0\}^{\{k:k\geq l_f+m+1\}}\] 
 generated by the coordinates from $l_f+1$ to $l_f+m$. This leads to
 \begin{equation*}
  \begin{split}
   {\mathcal D}_\epsilon(\Theta_f,d) &\geq {\mathcal N}_\epsilon \left( \prod_{k=l_f+1}^{l_f+m} \left[0,\sqrt{f(k)}\right], \|\cdot\|_{\R^{m}} \right) \\
    &\geq \frac{\vol \left( \prod_{k=l_f+1}^{l_f+m} \left[0,\sqrt{f(k)}\right] \right)}{\vol \left( B_{\R^{m}}(0, \epsilon) \right)}.
  \end{split}
 \end{equation*}
 It only remains to apply the logarithm function.
\end{proof}

\section{Redundancy of \textup{\texttt{ACcode}}} \label{sec:proof-ACcode}

\subsection{Moments of $M_n$} \label{sec:ProofExpectedMn} 

We first need a lemma which contains several useful results about the moments of $M_n$.

\begin{lem} \label{lem:ExpectedMn}
 Let $C$ and $\alpha$ be positive numbers satisfying $C > e^{2 \alpha}$. 
 Then, for all $n\geq1$,
 \begin{enumerate}
  \item \label{lem:Mn1}
   \[ \sup_{\boldsymbol{P} \in \Lambda_{C e^{-\alpha\cdot}}} \E_P[M_n] \leq \frac{1}{\alpha} \left( \ln n + \ln \frac{C}{1-e^{-\alpha}} +1 \right). \] 
  \item \label{lem:Mn2} \[ \sup_{\boldsymbol{P} \in \Lambda_{C e^{-\alpha\cdot}}} \E_P \left[ M_n \1_{M_n > \frac{1}{\alpha} \ln \frac{C n^2}{1-e^{-\alpha}}} \right] = O\left(\frac{\ln n}{n}\right). \]
  \item \label{lem:Mn3} \[ \sup_{\boldsymbol{P} \in \Lambda_{C e^{-\alpha\cdot}}} \E_P[M_n \ln M_n] = o(\ln^2 n). \]
 \end{enumerate}
\end{lem}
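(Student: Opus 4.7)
The plan is to derive all three estimates from a single uniform tail bound. For any $\boldsymbol{P}\in\Lambda_{Ce^{-\alpha\cdot}}$, summing the envelope gives $P(X_1\geq k)\leq \bar{C}\,e^{-\alpha k}$ with $\bar{C}=C/(1-e^{-\alpha})$, so by a union bound $P(M_n\geq k)\leq \min(1,\,n\bar{C}\,e^{-\alpha k})$ for every $k\geq 1$, uniformly in $\boldsymbol{P}$. Two natural thresholds will drive the analysis: $k_1=\frac{1}{\alpha}\ln(n\bar{C})$, where the two branches of the tail cross, and $K_n=\frac{1}{\alpha}\ln(n^2\bar{C})$, where the tail bound has dropped to $1/n$.

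For part~(\ref{lem:Mn1}) I would write $\E[M_n]=\sum_{k\geq 1}P(M_n\geq k)$, split the sum at $\lfloor k_1\rfloor$, bound the first $\lfloor k_1\rfloor$ terms by $1$, and compute the remaining geometric series using the identity $n\bar{C}\,e^{-\alpha k_1}=1$. A short calculation reduces the tail to $\frac{e^{\alpha\delta}}{e^\alpha-1}$ where $\delta=k_1-\lfloor k_1\rfloor\in[0,1)$, so
\[ \E[M_n] \leq k_1 - \delta + \frac{e^{\alpha\delta}}{e^\alpha-1}. \]
The function of $\delta$ on the right is maximized at the endpoints of $[0,1)$, both equal to $\frac{1}{e^\alpha-1}$, and one concludes using the elementary inequality $\alpha\leq e^\alpha-1$.

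For parts~(\ref{lem:Mn2}) and~(\ref{lem:Mn3}) the common tool is Abel summation: for any non-decreasing $f:\N_*\to\R_+$ and any integer threshold $L\geq 1$,
\[ \E\bigl[f(M_n)\,\1_{M_n\geq L}\bigr] = f(L)\,P(M_n\geq L) + \sum_{k\geq L+1}\bigl(f(k)-f(k-1)\bigr)\,P(M_n\geq k). \]
I would apply this with $L=L_n:=\lceil K_n\rceil$, which satisfies $P(M_n\geq L_n)\leq 1/n$. For part~(\ref{lem:Mn2}) take $f(k)=k$: the boundary term is $\leq L_n/n = O(\ln n/n)$, and the series, a geometric sum of ratio $e^{-\alpha}$ starting from a value at most $1/n$, contributes $O(1/n)$. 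For part~(\ref{lem:Mn3}) I would split $\E[M_n\ln M_n]$ at $L_n$: on $\{M_n<L_n\}$ monotonicity of $x\mapsto x\ln x$ gives the pointwise bound $L_n\ln L_n=O(\ln n\cdot\ln\ln n)=o(\ln^2 n)$, and on $\{M_n\geq L_n\}$ I would use $\ln M_n\leq M_n$ and apply the Abel identity with $f(k)=k^2$. Its increments $2k-1$, weighted by the exponential tail, produce a total contribution of $O(\ln^2 n/n)=o(1)$.

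No serious obstacle is anticipated; the whole lemma is a routine exercise in tail estimates for the maximum of iid variables with an exponential envelope. The only point requiring mild care is part~(\ref{lem:Mn1}), where the constant $1/\alpha$ (rather than the slightly weaker $1/(1-e^{-\alpha})$) must be extracted by keeping the split at the integer $\lfloor k_1\rfloor$ and observing that the resulting function of $\delta$ attains its maximum at the endpoints of the unit interval.
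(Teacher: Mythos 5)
Your proof is correct, but it follows a genuinely different route from the paper. You work entirely with the discrete tail bound $\Proba(M_n\geq k)\leq\min\bigl(1,\,n\bar{C}e^{-\alpha k}\bigr)$, $\bar{C}=C/(1-e^{-\alpha})$, obtained by summing the envelope and applying a union bound, and you then extract all three moments by layer-cake/Abel summation and geometric series, splitting at the crossover point $k_1=\frac1\alpha\ln(n\bar C)$ for part~1 and at $L_n=\lceil\frac1\alpha\ln(n^2\bar C)\rceil$ (where the tail is at most $1/n$) for parts~2 and~3. The paper instead uses a stochastic-domination (quantile coupling) argument: it shows $F\geq G$ where $G$ is the distribution function of $\beta+Y$, $Y\sim\mathrm{Exp}(\alpha)$ and $\beta=\frac1\alpha\ln\bar C$, so that $M_n$ is stochastically dominated by $\beta+M''_n$ with $M''_n$ the maximum of $n$ iid exponentials; it then computes with the explicit density of $M''_n$ (integration by parts, the change of variables $u=1-e^{-\alpha t}$ giving the harmonic-type integral $\int_0^1\frac{1-u^n}{1-u}\,\ud u\leq\ln n+1$, and explicit exponential-tail integrals for parts~2 and~3). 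Your approach is more elementary — no coupling via pseudo-inverses, no continuous densities, purely discrete sums — and your endpoint analysis of $\delta\mapsto-\delta+e^{\alpha\delta}/(e^\alpha-1)$ even yields a constant $1/(e^\alpha-1)\leq1/\alpha$ that is marginally sharper than the paper's $1/\alpha$ in part~1; the paper's coupling, on the other hand, hands you the exact law of the dominating maximum, which makes the higher-moment computations (e.g.\ the $\E[{M''_n}^2\1]$ term in part~3) mechanical and reusable for other functionals of $M_n$. Both arguments are uniform over the class because the tail/envelope bound is; your proposal states this explicitly, and the thresholds you use coincide with the paper's ($\beta+\frac2\alpha\ln n$ is exactly your $K_n$), so the statements proved are identical.
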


\begin{proof}
 Let $F$ denote the distribution function associated with $P$. For $t\geq 0$, we have
 \begin{equation*}
  \begin{split}
   \Proba(X_1>t) 
    &= \sum_{k\geq \lfloor t\rfloor +1} P(k) \\
    &\leq \frac{C}{1-e^{-\alpha}} e^{-\alpha (\lfloor t\rfloor +1)} \\
    &\leq e^{-\alpha (t - \beta)},
  \end{split}
 \end{equation*}
 where $\beta = \frac{1}{\alpha} \ln \frac{C}{1-e^{-\alpha}}$. Therefore $F(t) \geq G(t)$ for all $t\in\R$, where
 \begin{equation*} 
  G(t) = \1_{t \geq \beta} \left(1-e^{-\alpha (t - \beta)} \right).
 \end{equation*}
 We can identify in $G$ the distribution function of the random variable $\beta + Y$, where $Y$ follows the exponential distribution with parameter $\alpha$.
 
 Let $U_1, \ldots, U_n$ be $n$ iid random variables following the uniform distribution on $[0,1]$. For $1\leq i\leq n$, let us define
 \begin{equation*}
  \begin{split}
   X'_i &= F^{-1}(U_i) \\
   Y_i &= G^{-1}(U_i)-\beta,
  \end{split}
 \end{equation*}
 where $F^{-1}$ and $G^{-1}$ denote the pseudo-inverses of $F$ and $G$:
 \[ \forall t \in [0,1], \quad F^{-1}(t) = \inf \{x\in\R: F(x) \geq t\}. \]
 
 Then the $n$-dimensional vector $X'_{1:n} = (X'_1,\ldots,X'_n)$ has the same distribution as $X_{1:n}$, and the maxima $M'_n = \sup_{1\leq i\leq n} X'_i$ and $M_n$ follow the same distribution. 
 
 On the other hand, the relation $F \geq G$ entails $X'_i \leq \beta+Y_i$, for all $1\leq i\leq n$. As the consequence, if $M''_n = \sup_{1\leq i\leq n} Y_i$ denotes the maximum of all $Y_i$, we have $M'_n \leq \beta+M''_n$. Since the random variables $Y_i$ are independent, the probability distribution of $M''_n$ is easy to calculate. Indeed for all $t>0$, 
 \begin{equation*}
  \begin{split}
   \Proba(M''_n \leq t) &= \Proba(\forall\ 1\leq i\leq n,\ Y_i \leq t) \\
    &= \left(1-e^{-\alpha t}\right)^n.
  \end{split}
 \end{equation*}
 We can write down the density function of $M''_n$: 
 \[ f(t) = \left\{ \begin{array}{ll} n\, \alpha\, e^{-\alpha t} (1-e^{-\alpha t})^{n-1} & \text{if } t>0, \\ 0 & \text{otherwise.} \end{array} \right. \]
 
 Now we look for an upper bound of $\E[M_n]$ by taking advantage of the knowledge of that distribution:
 \begin{equation*}
  \begin{split}
   \E[M_n] &= \E[M'_n] \\
    &\leq \E[\beta+M''_n] \\
    &= \beta+ \int_{0}^{\infty} t\, n\, \alpha\, e^{-\alpha t} (1-e^{-\alpha t})^{n-1}\,\ud t \\
    &= \beta+  \int_{0}^{\infty} \left(1-(1-e^{-\alpha t})^{n}\right)\,\ud t
  \end{split}
 \end{equation*}
 integrating by parts. Use now the change of variables
 \begin{equation*}
  \left\{ \begin{array}{l}
   u = 1 - e^{-\alpha t} \\
   t = \frac{-\ln (1-u)}{\alpha}
  \end{array} \right.
 \end{equation*}
 \begin{equation*}
  \begin{split}
   \E[M_n] &\leq \beta+ \frac{1}{\alpha} \int_{0}^{1} \frac{1-u^n}{1-u} \ud u \\
    &\leq \frac{1}{\alpha}\left( \ln n + 1 + \ln \frac{C}{1-e^{-\alpha}} \right).
  \end{split}
 \end{equation*}
 
 Since the upper bound does not depend on $P$, that achieves the proof of the point~\ref{lem:Mn1}. We can handle the point~\ref{lem:Mn2} in the same way. 
 For all $t>0$, we have 
 \begin{equation*}
  \begin{split}
   \E\left[M_n \1_{M_n>\beta+t}\right] &\leq \E\left[(\beta+M''_n) \1_{M''_n>t}\right] \\
    &\leq \int_{t}^{\infty} (\beta+u) n\, \alpha\, e^{-\alpha u}\,\ud u \\
    &= n e^{-\alpha t} \left(t+\frac{1}{\alpha}+\beta\right).
  \end{split}
 \end{equation*}
 With $t=\frac{2}{\alpha} \ln n$, we get the second point of Lemma~\ref{lem:ExpectedMn}.
 
 The third item is similar. Since the function $x\mapsto x\ln x$ is increasing on $[1,+\infty)$ and $1\leq M'_n \leq \beta+M''_n$, we have
 \begin{equation*}
  \begin{split}
   &\E\left[M_n \ln M_n\right] \\
    &\quad\leq \E\left[(\beta+M''_n) \ln (\beta+M''_n)\right] \\
    &\quad= \E\left[\1_{M''_n\leq \beta} (\beta+M''_n) \ln (\beta+M''_n)\right] \\
     &\quad\qquad + \E\left[\1_{M''_n> \beta} \1_{M''_n\leq \frac{2}{\alpha} \ln n} (\beta+M''_n) \ln (\beta+M''_n)\right] \\
     &\quad\qquad + \E\left[\1_{M''_n> \beta} \1_{M''_n> \frac{2}{\alpha} \ln n} (\beta+M''_n) \ln (\beta+M''_n)\right] \\
    &\quad\leq 2\beta \ln (2\beta) + \frac{4}{\alpha} (\ln n) \ln \left(\frac{4}{\alpha} \ln n\right)  \\
     &\quad\qquad + \E\left[2 M''_n \ln (2 M''_n) \1_{M''_n> \frac{2}{\alpha} \ln n}\right] \\
    &\quad\leq 2\beta \ln (2\beta) + \left(\frac{4}{\alpha} \ln \frac{4}{\alpha}\right) \ln n + \frac{4}{\alpha} (\ln n) (\ln \ln n) \\
     &\quad\qquad + \E\left[4 {M''_n}^2 \1_{M''_n> \frac{2}{\alpha} \ln n}\right].
  \end{split}
 \end{equation*}
 Let us define
 \[ \gamma(n) = 2\beta \ln (2\beta) + \left(\frac{4}{\alpha} \ln \frac{4}{\alpha}\right) \ln n + \frac{4}{\alpha} (\ln n) (\ln \ln n). \]
 Note that $\gamma(n) = o(\ln^2 n)$. Then
 \begin{equation*}
  \begin{split}
   \E\left[M_n \ln M_n\right]
    &\leq \gamma(n) + \int_{\frac{2}{\alpha} \ln n}^{\infty} 4 u^2 n\, \alpha\, e^{-\alpha u}\,\ud u \\
    &= \gamma(n) + \frac{4 n e^{-2 \ln n}}{\alpha^2} (4 \ln^2 n + 4 \ln n + 2).
  \end{split}
 \end{equation*}
 Taking the supremum over $P$, we get
 \begin{equation*}
  \begin{split}
   \sup_{\boldsymbol{P} \in \Lambda_{C e^{-\alpha\cdot}}} \E_P\left[M_n \ln M_n\right] &\leq \gamma(n) + \frac{16 \ln^2 n + 16 \ln n + 8}{\alpha^2\, n} \\
    &= o(\ln^2 n).
  \end{split}
 \end{equation*}
\end{proof}

\subsection{Contribution of \textup{\texttt{C1}}} \label{sec:ContribC1Lemmas}

\begin{prop} \label{prop:ContribC1}
 Let $C$ and $\alpha$ be positive numbers satisfying $C > e^{2 \alpha}$. Then
 \begin{equation*}
  \begin{split}
   &\sup_{\boldsymbol{P} \in \Lambda_{C e^{-\alpha\cdot}}} \E_{P^n} [ - \log  Q^n(\widetilde{X}_{1:n}) - H(P^n) ] \\
     &\qquad \leq (1+o(1)) \frac{1}{4 \alpha \log e} \log^2 n.
  \end{split}
 \end{equation*}
\end{prop}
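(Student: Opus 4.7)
My plan is to decompose the expected $\textup{\texttt{C1}}$-codelength step by step, and then to recognize each conditional redundancy as a Krichevsky--Trofimov prediction error on an alphabet whose size grows with $i$. Writing $H(P^n) = n H(P)$ and using the chain rule,
\[ \E_{P^n}[-\log Q^n(\widetilde{X}_{1:n})] - H(P^n) = \sum_{i=0}^{n-1} \E\!\left[ \E[-\log Q_{i+1}(\widetilde{X}_{i+1} \mid X_{1:i}) \mid X_{1:i}] - H(P) \right]. \]
Conditional on $X_{1:i}$, the censored symbol $\widetilde{X}_{i+1}$ has the lumped distribution $\widetilde{P}_{M_i}$ on $\{0, 1, \ldots, M_i\}$ with $\widetilde{P}_{M_i}(j) = P(j)$ for $1 \leq j \leq M_i$ and $\widetilde{P}_{M_i}(0) = \bar{P}(M_i) := \sum_{k > M_i} P(k)$. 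Since $H(P) = H(\widetilde{P}_{M_i}) + \bar{P}(M_i)\,H(P \mid X > M_i) \geq H(\widetilde{P}_{M_i})$, dropping the nonnegative lumping loss yields
\[ \E_{P^n}[-\log Q^n(\widetilde{X}_{1:n})] - H(P^n) \leq \sum_{i=0}^{n-1} \E\!\left[ D(\widetilde{P}_{M_i} \,\|\, Q_{i+1}(\cdot \mid X_{1:i})) \right]. \]

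Next I bound each per-step Kullback--Leibler divergence. The conditional law $Q_{i+1}(\cdot \mid X_{1:i})$ is exactly the posterior mean under a Jeffreys $(\tfrac{1}{2}, \ldots, \tfrac{1}{2})$ prior on the simplex over $\{0, 1, \ldots, M_i\}$ with pseudo-counts $(0, n_i^1, \ldots, n_i^{M_i})$. The classical Krichevsky--Trofimov analysis on a fixed alphabet of size $k$ yields a per-step predictive KL divergence at step $i+1$ equal to $\tfrac{k-1}{2(i+1)}\log e$ bits, up to lower-order terms averaged over the past. Applied here with $k = M_i + 1$, this gives
\[ \E\!\left[ D(\widetilde{P}_{M_i} \,\|\, Q_{i+1}(\cdot \mid X_{1:i})) \right] \leq \frac{\log e}{2(i+1)}\,\E[M_i] + r_i, \]
where the remainder $r_i$ arises from the fluctuations of the counts $n_i^j$ around $iP(j)$ and from the mismatch between $\bar{P}(M_i) \lesssim e^{-\alpha M_i}$ and $Q_{i+1}(0 \mid X_{1:i}) \sim 1/(2i)$; both contributions sum to $o(\log^2 n)$.

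Summing over $i$ and using Lemma~\ref{lem:ExpectedMn}(1), which gives $\E[M_i] \leq \tfrac{1}{\alpha}(\ln i + K)$,
\[ \sum_{i=0}^{n-1} \frac{\log e}{2(i+1)}\,\E[M_i] \leq \frac{\log e}{2\alpha} \sum_{i=1}^{n} \frac{\ln i + K}{i+1} \sim \frac{\log e}{4\alpha}\, \ln^2 n = \frac{1}{4\alpha \log e}\,\log^2 n, \]
via $\ln^2 n = (\ln 2)^2 \log^2 n$ and $\ln 2 = 1/\log e$; Lemma~\ref{lem:ExpectedMn}(3) absorbs the secondary $\E[M_i \ln M_i]$ contributions in $\sum_i r_i$. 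The main obstacle is establishing the per-step KL bound with the correct leading constant: a naive substitution of $M_n$ for $M_i$ yields $\sum_i \tfrac{\E[M_n]\log e}{2(i+1)} \sim \tfrac{\E[M_n]\log e}{2}\ln n$, which with $\E[M_n] \sim \ln n/\alpha$ gives $\tfrac{1}{2\alpha\log e}\log^2 n$, exactly twice the target. The factor of $1/2$ is recovered only by exploiting that the effective alphabet at step $i+1$ is $M_i$ (not $M_n$) and that $\E[M_i] \sim \ln i /\alpha$ itself grows with $i$, so the integral $\int_1^n (\ln t)/t\,dt = \tfrac{1}{2}\ln^2 n$ supplies the missing factor.
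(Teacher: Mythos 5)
Your skeleton is arithmetically coherent and is a genuinely different decomposition from the paper's (the paper compares $Q^n$ globally to the single fixed-alphabet mixture $KT_{M_n+1}$ via the exact identity $Q_{i+1}(\widetilde x_{i+1}\mid x_{1:i})=\frac{2i+1+m_n}{2i+1+m_i}KT_{m_n+1}(x_{i+1}\mid x_{1:i})$, bounds the cumulative $KT_{M_n+1}$ redundancy by Catoni's non-asymptotic inequality, and then recovers the missing factor $1/2$ by lower-bounding the explicit gain $\E\sum_i\log\frac{2i+1+M_n}{2i+1+M_i}$ with concavity and Lemma~\ref{lem:ExpectedMn}). However, your argument has a genuine gap at exactly the step you yourself flag as the main obstacle: the per-step inequality $\E\big[D(\widetilde P_{M_i}\,\|\,Q_{i+1}(\cdot\mid X_{1:i}))\big]\le \frac{\log e}{2(i+1)}\E[M_i]+r_i$ with $\sum_i r_i=o(\log^2 n)$ is asserted by appeal to ``the classical Krichevsky--Trofimov analysis,'' but that analysis does not deliver it. The classical statement is an asymptotic (in $i$) expansion for a \emph{fixed, known} alphabet size $k$ and a \emph{fixed} distribution in the interior of the simplex; here the alphabet $\{0,\ldots,M_i\}$ is random and data-dependent (conditioning on $M_i$ distorts the joint law of the counts $n_i^j$), the bound must hold non-asymptotically and uniformly over $\Lambda_{Ce^{-\alpha\cdot}}$, and — crucially — a source in the class can place of order $\ln i$ symbols at the boundary scale $P(j)\asymp 1/i$, where the local quadratic expansion behind the constant $\frac{k-1}{2(i+1)}$ has error terms of the same order $1/i$ as its main term. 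A crude per-symbol error of $O(1/i)$ times $O(\ln i)$ such symbols gives a per-step remainder of order $(\ln i)/i$, which sums to order $\ln^2 n$ — the same order as the leading term you are trying to isolate. So the claim that the remainders are $o(\log^2 n)$ is not a routine bookkeeping step; it carries the entire difficulty of the proposition (the easy cumulative bound already gives $\frac{1}{2\alpha\log e}\log^2 n$, and the whole point is the constant).

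To rescue your route you would need to actually prove that per-step bound, e.g.\ by splitting $\widetilde P_{M_i}$ into ``large'' symbols ($P(j)\ge A/i$), treated by an explicit second-order expansion of the add-$\frac12$ predictor with uniformly controlled remainder, and ``small'' symbols together with the escape symbol $0$, whose aggregate per-step contribution must be shown (with cancellations, not just term-by-term $O(1/i)$ bounds) to sum to $o(\log^2 n)$ uniformly over the class, all while handling the dependence between $M_i$ and the counts. That is a substantial piece of analysis, comparable to the paper's treatment of the terms $(\textup{A}_2)$--$(\textup{A}_4)$, and it is precisely what the paper's global-comparison argument is designed to avoid. As written, your proposal assumes the hard part rather than proving it.
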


\begin{proof}
  We give here the sketch of the proof, and we delay the proofs of (\ref{eq:C1A1}), (\ref{eq:C1A2}), (\ref{eq:C1A3}), and (\ref{eq:C1A4}).

  Here we deal with the quantity 
  \begin{equation*} 
  (\textup{A}) = \sup_{\boldsymbol{P} \in \Lambda_{C e^{-\alpha\cdot}}} \E_{P^n} [ - \log  Q^n(\widetilde{X}_{1:n}) - H(P^n)].
  \end{equation*}
  that corresponds to the contribution of $\textup{\texttt{C1}}$. As we saw in Section~\ref{sec:code}, the coding probability $Q^n$ is based on Krichevsky-Trofimov mixtures. For $k\geq 1$, let $KT_k$ denote the usual Krichevsky-Trofimov mixture on the alphabet $\{1, \ldots, k\}$, whose conditional probabilities are, for all $0\leq i \leq n-1$ and for all $1\leq j \leq k$,
  \begin{equation*} 
  KT_k(X_{i+1}=j | X_{1:i} = x_{1:i}) = \frac{n_i^j + \frac{1}{2}}{i + \frac{k}{2}}.
  \end{equation*}
  Let us choose $k=m_n+1$. In this case, there is a simple relation between $KT_{m_n+1}$ and $Q^n$. For any sequence of $n$ positive integers $x_{1:n}\in\N_{*}^n$,
  \begin{multline*}
  Q_{i+1}(\widetilde{X}_{i+1}=\widetilde{x}_{i+1} | X_{1:i} = x_{1:i}) \\
    = \frac{ 2 i + 1 + m_n}{2 i + 1 + m_i} KT_{m_n+1}(X_{i+1}=x_{i+1} | X_{1:i} = x_{1:i}).
  \end{multline*}
  As a consequence, we can link the redundancy of $Q^n$ to the redundancy of $KT_{m_n+1}$:
  \begin{equation*}
    \log  Q^n(\widetilde{X}_{1:n}) = \log KT_{M_n+1}(X_{1:n}) +\sum_{i=0}^{n-1} \log \frac{ 2 i + 1+ M_n}{2 i + 1 + M_i}
  \end{equation*}
  and therefore
  \begin{equation*} 
  \begin{split}
    (\textup{A}) &= \sup_{\boldsymbol{P} \in \Lambda_{C e^{-\alpha\cdot}}} \Bigg( \overset{(\textup{A}_1)}{\overbrace{ \E_{P^n} \big[ -\log KT_{M_n+1}(X_{1:n}) - H(P^n)\big] }} \\
    &\qquad - \overset{(\textup{A}_2)}{\overbrace{ \E_{P^n} \bigg[ \sum_{i=0}^{n-1} \log \frac{ 2 i +1 + M_n}{2 i + 1 + M_i} \bigg] }} \Bigg).
  \end{split}
  \end{equation*}
  Note that $(\textup{A}_2)$ corresponds to the gain in redundancy of $Q^n$ with respect to $KT_{M_n+1}$. It illustrates the benefit of taking $M_i$ instead of $M_n$ as cutoff to encode $X_{i+1}$.

  On the one hand, we have
  \begin{equation} \label{eq:C1A1}
    (\textup{A}_1) \leq \frac{\E[M_n]}{2} \log n + \E[\log (M_n+1)].
  \end{equation}
  Since $\E[\log M_n] \leq \E[M_n]$, Lemma~\ref{lem:ExpectedMn} entails
  \[\sup_{\boldsymbol{P} \in \Lambda_{C e^{-\alpha\cdot}}} \E[\log (M_n+1)] = o(\log^2 n).\]

  Applying Lemma~\ref{lem:ExpectedMn} again, we see that $(\textup{A}_1)$ produces a redundancy equivalent to $\frac{1}{2 \alpha \log e} \log^2 n$, which is twice bigger than the minimax redundancy obtained in Theorem~\ref{thm:ExpClassRedundancy}. So, we will hope the corrective term $(\textup{A}_2)$ to be about $\frac{1}{4 \alpha \log e} \log^2 n$. \\
  To deal with $(\textup{A}_2)$, we use the concavity of the $\log$ function, and we group the terms in the sum, $M_n$ by $M_n$. Let $m=\big\lfloor \frac{n-1}{M_n} \big\rfloor$ be the number of bundles. \\
  To simplify the expression, we also neglect few terms at the beginning of the sum. Let $(h_n)_{n\geq 1}$ be a non-decreasing sequence of positive integers, such that $h_n \rightarrow \infty$ as $n \rightarrow \infty$, and let us define $\lambda_n = 2 h_n \log \left( 1+ \frac{1}{2 h_n}\right)$. Then
  \begin{equation} \label{eq:C1A2}
    (\textup{A}_2) \geq \lambda_n \E_{P^n} \left[ \sum_{k=h_n +1}^{m} \frac{M_n-M_{k M_n}}{2 (k +1)} \right].
  \end{equation}
  It is easy to verify that the function $x \mapsto x \log \left(1+\frac{1}{x}\right)$ is non-decreasing, and tends to $\log e$ when $x$ tends to the infinity; therefore $(\lambda_n)$ tends to $\log e$. We can write now
  \begin{align*}
    (\textup{A}) &\leq \sup_{\boldsymbol{P} \in \Lambda_{C e^{-\alpha\cdot}}} \Bigg( \frac{\E[M_n]}{2} \log n \\
      &\qquad - \lambda_n \E \left[ \sum_{k=h_n +1}^{m} \frac{M_n-M_{k M_n}}{2 (k +1)} \right] \Bigg) + o(\log^2 n) \\
  \begin{split}
    &\leq \frac{1}{2} \overset{(\textup{A}_3)}{\overbrace{ \sup_{\boldsymbol{P} \in \Lambda_{C e^{-\alpha\cdot}}} \E_{P^n} \left[ M_n \log n - \lambda_n M_n \sum_{k=h_n +1}^{m} \frac{1}{k + 1} \right] }} \\
      &\qquad + \frac{\lambda_n}{2} \overset{(\textup{A}_4)}{\overbrace{ \sup_{\boldsymbol{P} \in \Lambda_{C e^{-\alpha\cdot}}} \E_{P^n} \left[ \sum_{k=h_n +1}^{m} \frac{M_{k M_n}}{k + 1} \right] }} + o(\log^2 n)
  \end{split}
  \end{align*}

  Let us choose $h_n = \max \{1,\lfloor \ln n - 2 \rfloor\}$. Then
  \begin{equation} \label{eq:C1A3}
    (\textup{A}_3) = o(\log^2 n)
  \end{equation}
  \begin{equation} \label{eq:C1A4}
    (\textup{A}_4) \leq \frac{\log^2 n}{2 \alpha \log^2 e} + o(\log^2 n).
  \end{equation}

  Therefore we have
  \begin{equation*}
  \begin{split}
    (\textup{A}) 
      &\leq (1+o(1)) \frac{1}{4 \alpha \log e} \log^2 n
  \end{split}
  \end{equation*}
  which concludes the proof of Proposition~\ref{prop:ContribC1}.
\end{proof}

\begin{proof}[Proof of (\ref{eq:C1A1})]
 Let 
 \[ \widehat{P}^n(x_{1:n}) = \sup_{P^n} P^n(x_{1:n}) = \prod_{j\in \{x_1,\ldots,x_n\}} \left(\frac{n_j^n}{n}\right)^{n_j^n} \]
 be the maximum likelihood of the string $x_{1:n}$ over all iid distribution on $\N^n$. Then
 \begin{equation*}
  \begin{split}
   (\textup{A}_1) &\leq \E_{P^n} \left[ \log \frac{\widehat{P}^n(X_{1:n})}{KT_{M_n+1}(X_{1:n})} \right] \\
    &\leq \E_{P^n} \left[ \sup_{x_{1:n} \in \{1, \ldots, M_n+1\}} \log \frac{\widehat{P}^n(x_{1:n})}{KT_{M_n+1}(x_{1:n})} \right]
  \end{split}
 \end{equation*}
 Now we can apply a result from Catoni \cite[prop~1.4.1]{eec_IEEE:Catoni2001}: 
 \begin{lem}
  For all $k\geq 1$ and for all $x_{1:n} \in \{1, \ldots, k\}^n$,
  \[ -\log KT_{k}(x_{1:n}) + \log \widehat{P}^n(x_{1:n}) \leq \frac{k-1}{2} \log n + \log k. \]
 \end{lem}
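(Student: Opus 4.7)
The plan is to derive an explicit closed form for $KT_k(x_{1:n})$ and then compare it with the maximum-likelihood $\widehat{P}^n(x_{1:n})$ via sharp Gamma-function estimates.

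First, unrolling the conditional definition $KT_k(X_{i+1}=j\mid X_{1:i}) = (n_i^j+1/2)/(i+k/2)$, the telescoping product yields the Dirichlet-$(1/2,\ldots,1/2)$ closed form
\[
KT_k(x_{1:n}) = \frac{\Gamma(k/2)}{\Gamma(n+k/2)\,\Gamma(1/2)^k}\,\prod_{j=1}^{k} \Gamma(n_j^n+1/2),
\]
while $\widehat{P}^n(x_{1:n}) = \prod_{j: n_j^n>0} (n_j^n/n)^{n_j^n}$. Hence $\log[\widehat{P}^n(x_{1:n})/KT_k(x_{1:n})]$ is an exact combination of Gamma functions and powers of $n_j^n/n$, and the question reduces to bounding this combination.

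Next, I apply the Feller/Whittaker--Watson form of Stirling's formula (\ref{eq:stirling}) to each Gamma factor. For the symbols appearing in $x_{1:n}$, the leading $n_j^n\log n_j^n$ terms produced by $\Gamma(n_j^n+1/2)$ cancel exactly with the numerator of $\widehat{P}^n$, leaving non-positive $-\tfrac{1}{2}\log n_j^n$ corrections. For the denominator factor $\Gamma(n+k/2)$, Stirling yields a leading $(n+(k-1)/2)\log n$ contribution; writing $\sum_j n_j^n\log(n_j^n/n) = \sum_j n_j^n\log n_j^n - n\log n$ and using $\sum_j n_j^n = n$, the $n\log n$ pieces cancel and the residual $n$-dependent part is exactly $\tfrac{k-1}{2}\log n$. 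Symbols with $n_j^n=0$ contribute factors $\Gamma(1/2)$ in the numerator of $KT_k$ which cancel against the $\Gamma(1/2)^k$ normalizer, so the bound depends only on the effective alphabet and no slack is incurred there.

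Finally, I collect the $n$-independent pieces: the prefactor $\Gamma(k/2)/\Gamma(1/2)^k$, the bounded Stirling remainders $e^{\beta/(12x)}$ from (\ref{eq:stirling}), and the $-\tfrac{1}{2}\log n_j^n$ corrections. Concavity of $\log\Gamma$ combined with $\sum_j n_j^n = n$ pins the worst-case configuration to the equidistribution $n_j^n=n/k$, and a routine arithmetic tally shows that the total residue is at most $\log k$. The principal obstacle is tightening the generic Stirling slack to exactly $\log k$ rather than some unspecified $O(\log k)$: this requires the \emph{quantitative} error $\beta/(12x)$ with $\beta\in[0,1]$ from (\ref{eq:stirling}), together with the observation that the signs of the corrections (favorable for the product $\prod_j \Gamma(n_j^n+1/2)$, slightly unfavorable for the single $\Gamma(n+k/2)$) conspire so that no $\log\log n$ or extra $O(1)$ slack survives in the final constant.
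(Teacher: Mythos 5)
The paper does not actually prove this lemma: it is quoted from Catoni (Prop.~1.4.1 of \cite{eec_IEEE:Catoni2001}), so your proposal is being measured against that classical proof rather than anything in the text. Your starting point is fine: the Dirichlet closed form $KT_k(x_{1:n}) = \frac{\Gamma(k/2)}{\Gamma(n+k/2)\,\Gamma(1/2)^k}\prod_{j}\Gamma(n_j^n+1/2)$ is correct, the unused symbols do cancel against the $\Gamma(1/2)$'s, and the $\frac{k-1}{2}\log n$ term does come out of $\Gamma(n+k/2)$ after the $n\log n$ cancellation. But the final step, where all the work lies, is unsound. First, $\log\Gamma$ is \emph{convex}, not concave, so the claimed reduction to the equidistributed configuration has no basis; and in fact the extremal sequences for the KT regret are the degenerate ones, not the balanced ones. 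For $k=2$, $n=4$: the count vector $(4,0)$ gives $\log\bigl(\widehat{P}/KT\bigr)\approx 1.87$ while $(2,2)$ gives $\approx 1.42$, so ``pinning the worst case to $n_j^n=n/k$'' would bound the wrong configuration and the ``routine tally'' that follows cannot be salvaged as stated.

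Second, the slack accounting is off in a way that matters because the inequality is \emph{exactly tight}: at $n=1$ one has $KT_k(x_1)=1/k$, $\widehat{P}=1$, so the regret equals $\log k$, matching the bound with equality for every $k$. Hence a proof cannot afford any leftover positive slack from generic Stirling estimates at small $n$ (and your per-symbol bookkeeping is also not what Stirling gives: $\log\Gamma(n_j+1/2)$ differs from $n_j\log n_j$ by terms linear in $n_j$ plus constants, not by a benign $-\tfrac12\log n_j$; those linear pieces must be cancelled against $-\log\Gamma(n+k/2)$ before any constants can be compared). The standard arguments (Catoni's, or the Krichevsky--Trofimov/Willems-style proofs) close this gap either by an induction on $n$ with a per-step inequality, or by a monotonicity argument in the count vector that legitimately reduces to the degenerate corner, where the exact ratio $\frac{\Gamma(1/2)\,\Gamma(n+k/2)}{\Gamma(k/2)\,\Gamma(n+1/2)}\le k\,n^{(k-1)/2}$ is then verified. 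You need one of these mechanisms (or an equally careful substitute); as written, the crucial maximization step is missing and partly based on a false claim.
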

\end{proof}

\begin{proof}[Proof of (\ref{eq:C1A2})]
 We group the terms in $(\textup{A}_2)$, $M_n$ by $M_n$:
 \begin{equation*}
  \begin{split}
   (\textup{A}_2) 
    &\geq \E_{P^n} \left[ \sum_{k=1}^{m-1} \sum_{i=k M_n +1}^{(k+1) M_n} \log \left(1+ \frac{M_n-M_i}{2 i + M_i +1}\right) \right].
  \end{split}
 \end{equation*}
 
 From the relation $M_k \leq M_{k'}$ for all $k'\geq k \geq 1$, we can infer, for all $i\geq k M_n$,
 \[ \frac{M_n-M_i}{2 i + M_i +1} \leq \frac{M_n}{2 k M_n} = \frac{1}{2 k}  \]
 Since $\log$ is a concave function, we have $\log (1 + x) \geq \frac{x \log (1 + a)}{a}$ for all $a>0$ and $0\leq x\leq a$. Consequently, if we choose $a=\frac{1}{2 k}$,
 \begin{equation*}
  \begin{split}
   (\textup{A}_2) &\geq \E_{P^n} \left[ \sum_{k=1}^{m-1} \sum_{i=k M_n +1}^{(k+1) M_n} 2 k \log \left(1+ \frac{1}{2 k}\right) \frac{M_n-M_i}{2 i + M_i +1} \right] \\
    &\geq \E_{P^n} \left[ \sum_{k=h_n +1}^{m} \lambda_n \frac{M_n-M_{k M_n}}{2 k +2} \right].
  \end{split}
 \end{equation*}
\end{proof}

\begin{proof}[Proof of (\ref{eq:C1A3})]
 We have
 \begin{equation*}
  \begin{split}
   (\textup{A}_3) &= \sup_{\boldsymbol{P} \in \Lambda_{C e^{-\alpha\cdot}}} \Bigg[ \sum_{j\geq 1} P^n(M_n=j) \\
    &\qquad\qquad \times \Bigg( j \log n - \lambda_n j \sum_{k=h_n + 1}^{\left\lfloor \frac{n-1}{j} \right\rfloor} \frac{1}{k + 1} \Bigg) \Bigg]. 
  \end{split}
 \end{equation*}
 Then we plug in $h_n = \lfloor \ln n - 2 \rfloor$. For $n$ large enough, $h_n\geq 1$, and we have
 \begin{equation*}
  \begin{split}
   j \sum_{k=h_n + 1}^{\left\lfloor \frac{n-1}{j} \right\rfloor} \frac{1}{k + 1} &\geq j \int_{\ln n - 1}^{\left\lfloor \frac{n-1}{j} \right\rfloor + 1} \frac{\ud x}{x + 1} \\
   &= j\left( \ln \left(\left\lfloor\frac{n-1}{j}\right\rfloor + 2\right) - \ln (\ln n) \right) \\
   &\geq j \ln (n-1) - j \ln j - j \ln (\ln n),
  \end{split}
 \end{equation*}
 and therefore
 \begin{equation*}
  \begin{split}
   (\textup{A}_3) &\leq \sup_{\boldsymbol{P} \in \Lambda_{C e^{-\alpha\cdot}}} \big[(\log e - \lambda_n) \E[M_n] \ln n \\
    &\qquad\qquad + \lambda_n \E[M_n] \ln \frac{n}{n-1} \\
    &\qquad\qquad + \lambda_n \E[M_n \ln M_n] + \lambda_n \E[M_n] \ln (\ln n)\big].
  \end{split}
 \end{equation*}
 Then, if we use Lemma~\ref{lem:ExpectedMn} and the fact that $\lambda_n$ tends to $\log e$, we get (\ref{eq:C1A3}).
\end{proof}

\begin{proof}[Proof of (\ref{eq:C1A4})]
 We want to commute the expected value and the sum in $(\textup{A}_4)$. To do it, we need to get rid of $m$. We can note that the condition $k\leq m = \lfloor \frac{n-1}{M_n} \rfloor$ entails $k M_n \leq n-1$. Consequently, for $n$ big enough,
 \begin{equation*}
  \begin{split}
   (\textup{A}_4) &\leq \sup_{\boldsymbol{P} \in \Lambda_{C e^{-\alpha\cdot}}} \E_{P^n} \left[ \sum_{k=3}^{m} \frac{M_{k M_n}}{k + 1} \right] \\
    &\leq \sup_{\boldsymbol{P} \in \Lambda_{C e^{-\alpha\cdot}}} \E_{P^n} \left[ \sum_{k=3}^{n-1} \frac{M_{k M_n} \1_{k M_n \leq n-1}}{k + 1} \right] \\
    &\leq \sum_{k=3}^{n-1} \frac{\sup_{\boldsymbol{P} \in \Lambda_{C e^{-\alpha\cdot}}} \E_{P^n} [M_{k M_n} \1_{M_n \leq l_n}]}{k + 1} \\
     &\quad + \sup_{\boldsymbol{P} \in \Lambda_{C e^{-\alpha\cdot}}} \E_{P^n} [M_{n} \1_{M_n > l_n}] \sum_{k=3}^{n-1} \frac{1}{k + 1},
  \end{split}
 \end{equation*}
 where $l_n = \left\lfloor \frac{1}{\alpha} \left( 2\ln n + \ln \frac{C}{1-e^{-\alpha}}\right) \right\rfloor$. We can now plug in the results of Lemma~\ref{lem:ExpectedMn}:
 \begin{equation*}
  \begin{split}
   (\textup{A}_4) &\leq \sum_{k=3}^{n-1} \frac{\ln (k l_n) + 1 + \ln \frac{C}{1-e^{-\alpha}}}{(k + 1)\alpha} + o(1) \sum_{k=3}^{n-1} \frac{1}{k + 1} \\
    &\leq \frac{1}{\alpha} \sum_{k=3}^{n-1} \frac{\ln k}{k + 1} + \frac{1}{\alpha} (\ln l_n + O(1)) \sum_{k=3}^{n-1} \frac{1}{k + 1}.
  \end{split}
 \end{equation*}
 Note that $l_n = O(\ln n)$, and consequently $\ln l_n = O(\ln \ln n)$. So
 \begin{equation*}
  \begin{split}
   (\textup{A}_4) &\leq \frac{1}{\alpha} \int_{3}^{n} \frac{\ln x}{x}\,\ud x + O(\ln \ln n) \int_{3}^{n} \frac{\ud x}{x} \\
    &\leq \frac{1}{2 \alpha} \ln^2 n + o(\ln^2 n).
  \end{split}
 \end{equation*}
\end{proof}

\subsection{Contribution of \textup{\texttt{C2}}} \label{sec:ContribC2Lemmas}

\begin{prop} \label{prop:ContribC2}
 Let $C$ and $\alpha$ be positive numbers satisfying $C > e^{2 \alpha}$. Then
 \[ \sup_{\boldsymbol{P} \in \Lambda_{C e^{-\alpha\cdot}}} \E_{P^n} [l(\textup{\texttt{C2}})] \leq o(\log^2 n). \]
\end{prop}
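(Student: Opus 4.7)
The plan is to show that $l(\textup{\texttt{C2}})$ is pointwise dominated, up to universal constants, by $M_n$, and then invoke point~\ref{lem:Mn1} of Lemma~\ref{lem:ExpectedMn}, which already gives $\sup_P \E_P[M_n] = O(\ln n)$. This is far stronger than the $o(\log^2 n)$ bound claimed: the contribution of $\textup{\texttt{C2}}$ is essentially negligible compared with that of $\textup{\texttt{C1}}$.

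Set $N = n_n^0$ for the number of record epochs in $x_{1:n}$ and $g_k = \widetilde{m}_k - \widetilde{m}_{k-1} + 1$, so that by the very definition of $\textup{\texttt{C2}}$ we have $l(\textup{\texttt{C2}}) = l_E(1) + \sum_{k=1}^{N} l_E(g_k)$. A quick inspection of the formula (\ref{eq:EliasLength}) yields absolute constants $K_1, K_2$ with $l_E(x) \le K_1 \log(x+1) + K_2$ for every $x \ge 1$ (the iterated-logarithm contribution is absorbed into $\log(x+1)$). The real content of the argument lies in two elementary structural observations about the record process: the $g_k$'s telescope, so $\sum_{k=1}^{N} g_k = M_n + N$, and each record-jump has size at least one, so $N \le M_n$. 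Applying Jensen's inequality to the concave function $\log$ then gives
\[
 \sum_{k=1}^{N} \log(g_k + 1) \;\le\; N \log\!\left(\frac{M_n + 2N}{N}\right) \;\le\; N \log\!\left(\frac{3 M_n}{N}\right),
\]
where the last inequality uses $N \le M_n$. A one-line calculus check shows that $N \mapsto N \log(3 M_n / N)$ has its unconstrained maximum at $N = 3 M_n / e$; since $3/e > 1$, this critical point sits above $M_n$, the function is non-decreasing on $[1, M_n]$, and its value on this interval is bounded by the endpoint value $M_n \log 3$. Together with the trivial estimate $K_2 N \le K_2 M_n$ this furnishes a pointwise inequality of the form $l(\textup{\texttt{C2}}) \le K M_n + O(1)$ with $K$ depending only on $K_1, K_2$.

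It only remains to take expectations on both sides and the supremum over $\boldsymbol{P} \in \Lambda_{C e^{-\alpha \cdot}}$, and to invoke Lemma~\ref{lem:ExpectedMn}(\ref{lem:Mn1}), which immediately delivers $\sup_{P} \E_{P^n}[l(\textup{\texttt{C2}})] = O(\ln n) = o(\log^2 n)$, as required. I do not anticipate any genuine obstacle. The two points where a small amount of care is needed are (i) checking that the crude envelope $l_E(x) \le K_1 \log(x+1) + K_2$ absorbs the iterated-logarithm term uniformly for small and large $x$, and (ii) verifying that $3/e > 1$, so that the Jensen-derived function is maximized on $[1, M_n]$ at the right endpoint $N = M_n$ rather than at an interior critical point.
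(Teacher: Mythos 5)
Your argument is correct, and it takes a genuinely different route from the paper. You bound $l(\textup{\texttt{C2}})$ \emph{deterministically}: writing $l(\textup{\texttt{C2}})=l_E(1)+\sum_{k=1}^{N}l_E(g_k)$ with $g_k=\widetilde{m}_k-\widetilde{m}_{k-1}+1$, using $l_E(x)\leq K_1\log(x+1)+K_2$, the telescoping identity $\sum_k g_k=M_n+N$, the record constraint $N\leq M_n$, and Jensen, you get the pointwise bound $l(\textup{\texttt{C2}})\leq K M_n+O(1)$ (all steps check out: the map $N\mapsto N\log(3M_n/N)$ is indeed increasing on $[1,M_n]$ since its critical point $3M_n/e$ exceeds $M_n$). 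Only then do you take expectations, so everything reduces to Lemma~\ref{lem:ExpectedMn}(\ref{lem:Mn1}). The paper instead bounds each record's codeword by $l_E(X_i+1)$ at the record epochs, so that $l(\textup{\texttt{C2}})\leq 1+\sum_{i=1}^n \1_{X_i>M_{i-1}}l_E(X_i+1)$, and then does the probabilistic work at the level of this sum: Lemma~\ref{lem:C2point1} computes its expectation via the law of the record times, splits at a threshold $K_n\approx\frac{1}{\alpha}\ln n$, and Lemma~\ref{lem:C2point2} controls the tail integral, yielding a bound of order $\log n\,\log\log n$. Your route is more elementary (no distributional computation beyond $\E[M_n]$) and in fact slightly sharper, giving $O(\log n)$; what it exploits, and the paper's bound discards, is that the jumps $g_k$ sum to $M_n+N$, so the concavity of $\log$ caps the total Elias length by a constant times the final maximum rather than by (number of records)$\times$(typical codeword length). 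The paper's decomposition, on the other hand, is the natural one if one wants per-symbol information about when and what is sent to $\textup{\texttt{C2}}$, and its Lemma~\ref{lem:C2point1} is stated for a general non-decreasing $g$, but for the purpose of Proposition~\ref{prop:ContribC2} both arguments deliver $o(\log^2 n)$ with room to spare.
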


\begin{proof}
 Like in the previous subsection, we give first the sketch of the proof, and we delay several technical lemmas.
 \begin{multline*}
  \sup_{\boldsymbol{P} \in \Lambda_{C e^{-\alpha\cdot}}} \E_{P^n} [l(\textup{\texttt{C2}})] \\
  \leq 1 + \sup_{\boldsymbol{P} \in \Lambda_{C e^{-\alpha\cdot}}} \sum_{i=1}^{n} \E_{P^n} \left[  \1_{X_i>M_{i-1}} l_E(X_i+1) \right].
 \end{multline*}

  We deal with this sum thanks to the following lemma:
  \begin{lem} \label{lem:C2point1}
  Let $g$ be a positive and non-decreasing function on $[1,\infty)$. Let $(K_n)_{n\geq 1}$ be a non-decreasing sequence of positive integers. Then, for all $n\geq 1$,
  \begin{multline*}
    \sup_{\boldsymbol{P} \in \Lambda_{C e^{-\alpha\cdot}}} \sum_{i=1}^{n} \E_{P^n} \left[  \1_{X_i>M_{i-1}} g(X_i) \right]  \\
    \leq (1+o(1)) g(K_n) \frac{1}{\alpha} \ln n + C\, n \int_{K_n}^{\infty} g(x+1) e^{-\alpha x} \,\ud x.
  \end{multline*}
  \end{lem}

  To apply Lemma~\ref{lem:C2point1}, we extend the definition of $l_E$ on $[1,\infty)$ by
  \[ l_E(x) = \left\{ \begin{array}{ll} 1 & \text{if } x \in [1,2), \\ 1 + \lfloor \log x \rfloor + 2 \left\lfloor \log \lfloor \log x \rfloor + 1 \right\rfloor & \text{if } x\geq 2. \end{array} \right. \]
  We get
 \begin{multline*}
  \sup_{\boldsymbol{P} \in \Lambda_{C e^{-\alpha\cdot}}} \E_{P^n} [l(\textup{\texttt{C2}})] \\
  \leq (1+o(1)) \frac{l_E(K_n+1)}{\alpha} \ln n + C\, n \int_{K_n}^{\infty} l_E(x+2) e^{-\alpha x} \,\ud x
 \end{multline*}

  Then we can choose $K_n=\max \{1,\lfloor \frac{1}{\alpha} \ln n \rfloor\}$. This entails
  \[l_E(K_n+1) \underset{n \rightarrow \infty}{\sim} \log K_n \sim \log \log n = o(\log n),\] 
  and therefore
  \[ \frac{1}{\alpha} l_E(K_n+1) \ln n = o(\log^2 n). \]

  The remaining term is treated by Lemma~\ref{lem:C2point2}, which achieves the proof of Proposition~\ref{prop:ContribC2}:

  \begin{lem} \label{lem:C2point2}
  Let $\alpha>0$ be a real number, and let $K_n=\max \{1,\lfloor \frac{1}{\alpha} \ln n \rfloor\}$. Then
  \[ n \int_{K_n}^{\infty} l_E(x+2) e^{-\alpha x} \,\ud x = o(\log n). \]
  \end{lem}
\end{proof}

\begin{proof}[Proof of Lemma~\ref{lem:C2point1}]
 Let $\boldsymbol{P}$ be an element of $\Lambda_{C e^{-\alpha\cdot}}$. Let us define, for all $k\geq 0$,
 \[ \bar{p}(k) = P(X_1>k) = \sum_{j\geq k+1} P(k), \] 
 and
 \[ (\textup{B}_1) = \sum_{i=1}^{n} \E_{P^n} \left[  \1_{X_i>M_{i-1}} g(X_i) \right]. \]
 Note that, for all $1\leq i\leq n$, $X_i$ and $M_{i-1}$ are independent random variables, and
 \begin{equation*}
  \begin{split}
   P^n(M_i \leq k) &= P^n(\forall\ 1\leq j \leq i,\ X_j \leq k) \\ 
    &= (1-\bar{p}(k))^i.
  \end{split}
 \end{equation*}
 Then we can write
 \begin{equation*}
  \begin{split}
   (\textup{B}_1) &= \sum_{i=1}^{n} \sum_{k\geq 0} P^n(M_{i-1} = k) \sum_{m\geq k+1} P(m) g(m) \\ 
    &= \sum_{m\geq 1} P(m) g(m) \sum_{i=1}^{n} \sum_{k=0}^{m-1} \Proba(M_{i-1} = k) \\
    &= P(1) g(1) + \sum_{m\geq 2} P(m) g(m) \sum_{i=1}^{n} (1-\bar{p}(m-1))^{i-1} \\
    &= \sum_{m\geq 1} P(m) g(m) \frac{1-(1-\bar{p}(m-1))^{n}}{\bar{p}(m-1)}.
  \end{split}
 \end{equation*}
 If we take $g(x)=1$ for all $x$, we get
 \begin{equation*}
  \begin{split}
   \sum_{m\geq 1} P(m) \frac{1-(1-\bar{p}(m-1))^{n}}{\bar{p}(m-1)} &= \E \left[ \sum_{i=1}^{n} \1_{X_i>M_{i-1}} \right] \\
    &\leq \E [M_n]. 
  \end{split}
 \end{equation*}
 In the general case, we can split the sum at $K_n$, and we get
 \begin{equation*}
  \begin{split}
   (\textup{B}_1) &= \sum_{m=1}^{K_n} P(m) g(m) \frac{1-(1-\bar{p}(m-1))^{n}}{\bar{p}(m-1)} \\
     &\qquad + \sum_{m\geq K_n+1} P(m) g(m) \frac{1-(1-\bar{p}(m-1))^{n}}{\bar{p}(m-1)} \\
    &\leq g(K_n) \sum_{m\geq 1} P(m) \frac{1-(1-\bar{p}(m-1))^{n}}{\bar{p}(m-1)} \\
     &\qquad + \sum_{m\geq K_n+1} n P(m) g(m) \\
    &\leq g(K_n) \E[ M_n] + C\, n \sum_{m\geq K_n+1} g(m) e^{-\alpha m}.
  \end{split}
 \end{equation*}
 At this point, we can take the supremum over all sources $\boldsymbol{P}$ in $\Lambda_{C e^{-\alpha\cdot}}$:
 \begin{equation*}
  \begin{split}
   &\sup_{\boldsymbol{P} \in \Lambda_{C e^{-\alpha\cdot}}} \sum_{i=1}^{n} \E_{P^n} \left[  \1_{X_i>M_{i-1}} g(X_i) \right]  \\
   &\quad\leq (1+o(1)) g(K_n) \frac{1}{\alpha} \ln n + C\, n \int_{K_n}^{\infty} g(x+1) e^{-\alpha x} \,\ud x.
  \end{split}
 \end{equation*}
\end{proof}

\begin{proof}[Proof of Lemma~\ref{lem:C2point2}]
 \begin{equation*}
  \begin{split}
   &n \int_{K_n}^{\infty} l_E(x+2) e^{-\alpha x} \,\ud x \\
    &\quad\leq n \int_{K_n}^{\infty} \left( \log (x+2) +2 \log \log (x+3) + 1 \right) e^{-\alpha x} \,\ud x \\
    &\quad\leq n e^{-\alpha K_n} \log (K_n+3) \\
     &\quad\qquad\int_{K_n +3}^{\infty} \frac{\log x +2 \log \log x + 1}{\log (K_n+3)} e^{-\alpha (x-K_n -3)} \,\ud x \\
    &\quad\leq e^{\alpha} \log (K_n+3) \left(\sup_{x\geq K_n+3} \frac{\log x +2 \log \log x + 1}{\log x}\right) \\ 
     &\quad\qquad \int_{K_n +3}^{\infty} \frac{\log x}{\log (K_n+3)} e^{-\alpha (x-K_n -3)} \,\ud x \\
    &\quad= O(\log K_n) \int_{0}^{\infty} \left(1+\frac{\log \left(1+\frac{x}{K_n+3}\right)}{\log (K_n+3)}\right) e^{-\alpha x} \,\ud x \\
    &\quad= o(\log n).
  \end{split}
 \end{equation*}
 The supremum is correctly defined and bounded, because the function
 \[ x \mapsto \frac{\log x +2 \log \log x + 1}{\log x} \]
 is continuous and tends to $1$ as $x$ tends to the infinity.
\end{proof}

\subsection{Proof of Theorem~\ref{thm:ACcodeRedundancy}} 

The message sent by the \textup{\texttt{ACcode}} algorithm is compound of two strings $\textup{\texttt{C1}}$ and $\textup{\texttt{C2}}$. $\textup{\texttt{C1}}$ corresponds to the part of the message encoded by the arithmetic code, with coding probability $Q^{n+1}$. The arithmetic code encodes a message $\widetilde{x}_{1:n}0$ with $\lceil - \log Q^{n+1}(\widetilde{x}_{1:n}0) \rceil +1$ bits. We have
 \begin{equation*}
 \begin{split}
  \E_{P^n} [- \log Q_{n+1}(0 | X_{1:n})] 
   &= \E_{P^n} [\log (M_n +1 + 2n) ] \\
   &\leq \log (2n) + \frac{\E_{P^n} [ M_n+1 ]}{2n} \\
   &= O(\log n)
 \end{split}
 \end{equation*}
 thanks to Lemma~\ref{lem:ExpectedMn}. Therefore the redundancy of \textup{\texttt{ACcode}} can be upper bounded, for all $n\geq 2$, by
 \begin{multline*}
 \sup_{\boldsymbol{P} \in \Lambda_{C e^{-\alpha\cdot}}} \E_{P^n} [ l(\textup{\texttt{C1}})+l(\textup{\texttt{C1}})] - H(P^n)]  \\
  \begin{aligned}
   &\leq\sup_{\boldsymbol{P} \in \Lambda_{C e^{-\alpha\cdot}}} \E_{P^n} [ - \log  Q^n(\widetilde{X}_{1:n}) - H(P^n) ] \\
   &\quad + \sup_{\boldsymbol{P} \in \Lambda_{C e^{-\alpha\cdot}}} \E_{P^n} [l(\textup{\texttt{C2}})] + O(\log n).
  \end{aligned}
 \end{multline*}
 We conclude thanks to Propositions~\ref{prop:ContribC1} and \ref{prop:ContribC2}.

\section*{Acknowledgment}

I wish thank Elisabeth Gassiat for her helpful advice, for the many ideas in this paper she suggested me, and for her constant availability to my questions. 

Thanks also to Aur\'elien Garivier and to St\'ephane Boucheron for the useful discussions we had. I don't forget the ideas I got from my classmates, especially Rafik Imekraz. 

\ifCLASSOPTIONcaptionsoff
  \newpage
\fi


\end{document}